\numberwithin{equation}{section}
\newtheorem{theorem}{Theorem}[section]
\newtheorem{lemma}[theorem]{Lemma}
\newtheorem{proposition}[theorem]{Proposition}
\newtheorem{corollary}[theorem]{Corollary}
\theoremstyle{definition}
\newtheorem{definition}[theorem]{Definition}
\newtheorem{procedure}[theorem]{Procedure}
\newtheorem{remark}[theorem]{Remark}
\newtheorem{example}[theorem]{Example}
\newcommand{\hir}[1]{\textcolor{red}{#1}}
\newcommand{\cb}{\color {blue}}
\newcommand{\rmv}[1]{}
\begin{document}


\title[Relative generalized Hamming weights of evaluation codes]
{Relative generalized Hamming weights of evaluation codes}

\author[D. Jaramillo-Velez]{Delio Jaramillo-Velez}
\address{(Delio Jaramillo-Velez) 
Departamento de
Matem\'aticas\\
Centro de Investigaci\'on y de Estudios Avanzados del IPN\\
Apartado Postal
14--740 \\
07000 Mexico City, Mexico.
}
\email{djaramillo@math.cinvestav.mx}

\author[H. H. L\'opez]{Hiram H. L\'opez}
\address{(Hiram H. L\'opez) Department of Mathematics and Statistics, Cleveland State University, Cleveland, OH USA}
\email{h.lopezvaldez@csuohio.edu}

\author[Y. Pitones]{Yuriko Pitones}
\address{(Yuriko Pitones) Departmento de Matem\'aticas, Universidad Aut\'onoma Metropolitana-Iztapalapa, Ciudad de M\'exico, M\'exico.
} 
\email{ypitones@xanum.uam.mx}

\thanks{The first author was partially supported by a CONACYT scholarship. The second author was partially supported by an AMS--Simons Travel Grant. The third author was partially supported by CONACyT and SNI}
\keywords{Evaluation codes, toric codes, next-to-minimal weights, affine
torus, footprint, degree, squarefree evaluation codes, relative generalized Hamming 
weights, finite field, Gr\"obner bases.}
\subjclass[2010]{Primary 13P25; Secondary 14G50, 94B27, 11T71.} 
\begin{abstract} 
The aim of this work is to algebraically describe the relative generalized Hamming weights of evaluation codes.
We give a lower bound for these weights in terms of a footprint bound. We prove that this bound can be sharp.
We compute the next-to-minimal weight of toric codes over hypersimplices of degree 1.
\end{abstract}

\dedicatory{Dedicated to Rafael H. Villarreal on the occasion of his seventieth birthday.}

\maketitle

\section{Introduction}\label{intro-section}
Let $K:=\mathbb{F}_q$ be a finite field with $q$ elements,
$S:=K[t_1,\ldots,t_s]=\bigoplus_{d=0}^\infty S_d$ a polynomial ring over $K$
with the standard grading, and 
$X:=\{P_1,\ldots,P_m\}$ a set of distinct points in the affine space
$\mathbb{A}^s:=K^s$. The \textit{evaluation map} 
is the $K$-linear map given by
$${\rm ev}\colon S\rightarrow K^{m},\quad
f\mapsto\left(f(P_1),\ldots,f(P_m)\right).
$$ 
\quad The kernel of ${\rm ev}$, denoted by $I(X)$, is the 
\textit{vanishing ideal} of $X$ consisting of the polynomials of $S$
that vanish at all points of $X$. This map induces an isomorphism of
$K$-linear spaces between $S/I(X)$ and $K^m$. Let $\mathcal{L}$ be a
linear subspace of $S$ of finite dimension. The image of $\mathcal{L}$
under the evaluation map, denoted by $\mathcal{L}_X$, is called an
\textit{evaluation code} on $X$ \cite{stichtenoth,tsfasman}. 
Fix an integer $d\geq 1$ and let $S_{\leq
d}:=\bigoplus_{i=0}^dS_i$ be the
$K$-linear subspace of $S$ of all polynomials of degree at most $d$. If
$\mathcal{L}$ is equal to $S_{\leq d}$, then the resulting evaluation
code $\mathcal{L}_X,$ denoted by $C_X(d),$ is called an 
\textit{affine Reed-Muller-type code} of degree $d$ on $X$
\cite{duursma-renteria-tapia,GRT}.

Let $\prec$ be a monomial order on $S$ and $I:=I(X)$ the
vanishing ideal of $X$. We
denote the {initial monomial} of a non-zero polynomial 
$f\in S$ by ${\rm in}_\prec(f),$ and the initial ideal of $I$ by 
${\rm in}_\prec(I)$. A monomial $t^a:=t_1^{a_1}\cdots t_s^{a_s}$,
$a=(a_1,\dots,a_s) \in \mathbb{Z}^s_{\geq 0}$, 
is called a 
\textit{standard monomial} of $S/I$, with respect 
to $\prec$, if $t^a\notin{\rm in}_\prec(I)$. 
The \textit{footprint} of $S/I$, denoted
$\Delta_\prec(I)$, is the set of all standard monomials of $S/I$. The
footprint has been extensively utilized for the study of evaluation codes 
\cite{datta,geil-2008,geil-hoholdt,geil-pellikaan,rth-footprint,Pellikaan,Jaramillo-Vaz-pinto-Villarreal,cartesian-codes}.

If $\mathcal{A}\subseteq S$, the linear subspace of $S$ spanned by $\mathcal{A}$ is denoted by $K\mathcal{A}$. The linear code $\mathcal{L}_X$ is called a \textit{standard
evaluation code} on $X$ relative to $\prec$ if 
$\mathcal{L}$ is a linear subspace 
of $K\Delta_\prec(I)$. 
A polynomial $f$ is called
a \textit{standard polynomial} of $S/I$ if $f\neq 0$ and $f$ is in
$K\Delta_\prec(I)$. 
As the field $K$ is finite, there are only a finite number of standard
polynomials. The standard evaluation codes are those evaluation codes 
defined by vector spaces generated by standard polynomials. The dual of an evaluation code $\mathcal{L}_{X}$, denoted by $(\mathcal{L}_{X})^{\perp}$, is the set of all $\alpha\in K^{m}$ such that $\langle\alpha,\beta\rangle=0$ for all $\beta\in\mathcal{L}_{X}$, where $\langle\:,\:\rangle$ is the ordinary inner product in $K^{m}$. In \cite{Lopez-Soprunov-Villarreal}, the code $(\mathcal{L}_{X})^{\perp}$ is studied in terms of  standard monomials and indicator functions. Among the main properties, it is shown that the dual of an evaluation code is an evaluation code.

Let $C \subseteq K^m$ be a linear code. The {\it Hamming weight}, or just {\it weight}, of an element $c \in C$ is the number of nonzero coordinates in $c.$
Let $A_i (C)$ be the number of codewords of weight $i$ in $C.$ The list $A_i$ for $0 \leq i \leq m$ is called the {\it weight distribution} of $C.$  The elements $i>0$ for which $A_i (C)\neq 0$ are called the {\it Hamming weights} or {\it next-to-minimal weights} of $C.$ Thus, the {\it minimum distance} of $C$ is the first non-zero Hamming weight. The Hamming weights allow us the computation of important efficiency parameters for the code $C$, such as the Probability of Undetected Error {\rm \cite{WML82}}. 

Cherdieu, Rolland, Geil, Bruen, and Leduc studied the weight distribution of generalized Reed-Muller codes in {\rm \cite{Bruen,Cherdieu-Rolland,geil,Leduc,Rolland}}. More recently, Carvalho and Neumann have determined the next-to-minimal weights of  affine Cartesian codes,  binary projective Reed-Muller codes, and projective Reed-Muller codes {\rm \cite{Carvalho,Car-Neu-pro-RM2,Car-Neu-af2,Car-Neu-pro-RM1,Car-Neu-af1,Car-Neu-bi}}. In this work, we study the next-to-minimal weights of certain evaluation codes called toric codes over hypersimplices.

The relative generalized Hamming weights (RGHW) of two linear codes $C_2 \subsetneq C_1$ were introduced by Luo et al. in \cite{Luo-Mitrpant-Vinck-Chen}. In the same work, the relative dimension/length profile is also extended. The RGHW are a natural generalization of the generalized Hamming weights~\cite{Wei}. The RGHW are important to design a perfect secrecy coding scheme for the coordinated multiparty model. As another application, the $m$-th RGHW of two codes $C_2 \subsetneq C_1$ expresses the smallest size of unauthorized sets that can obtain $m$ $q$-bits {\rm \cite{Bains}}, where $q$ is the size of the alphabet of the codes $C_1$ and $C_2$.

The RGHW of certain important families of linear codes have been studied recently.
Some properties of the RGHW of one-point algebraic geometry codes are described in {\rm\cite{GMMRL}}.
O. Geil and M. Stefano obtained the relative generalized Hamming weights of $q$-ary Reed-Muller codes using the footprint bound \cite{geil-stefano}. M. Datta followed the footprint bound technique and presented a combinatorial formula for the relative generalized Hamming weights of affine Cartesian codes {\rm \cite{datta}}. These works introduced methods to study certain evaluation codes. In this work, we algebraically describe the RGHW of evaluation codes in terms of the degree of $I(X).$

The contents of this paper are as follows. In Section \ref{Prelim-section}, we introduce some concepts that will be needed throughout the paper. 
In Section \ref{RGHW-section}, we show our main results related to RGHW. Theorem \ref{relative-Hamming-weights}  gives an algebraic equivalence  for the relative generalized Hamming weights of evaluation codes. In Section \ref{sec-weight-distribution}, we present  the next-to-minimal weights of toric codes over hypersimplices of degree one. We also provide bounds for the maximum number of zeros of homogeneous squarefree polynomials over the affine torus. Section \ref{example-sec} includes implementations in Macaulay2 \cite{mac2} and examples that ilustrare some of our results.

For additional information about Gr\"obner bases and Commutative Algebra, we refer to {\rm \cite{monalg-rev}}. For basic Coding Theory, we refer to {\rm \cite{MacWilliams-Sloane}}. 


\section{Preliminaries}\label{Prelim-section} 
In this section, we explore some well-known definitions and results that we use in the following sections. In particular, we revisit the notion of affine Hilbert functions and the degree of an ideal, which help us to determine the maximum number of common zeros of a system of polynomials. We also recall the definition of relative generalized Hamming weight of a code. For a more detailed description of the results that are mentioned here we refer to {\rm \cite{CLO,Luo-Mitrpant-Vinck-Chen,MacWilliams-Sloane,monalg-rev}}.

Let $I$ be an ideal of $S=K[t_1,\ldots,t_s].$ The Krull dimension of $S/I$ is denoted by $\dim(S/I)$. We say that the ideal $I$ has {\it dimension} $k$ if $\dim(S/I)$ is equal to $k$. The \textit{height} of $I$ is denoted and defined by ${\rm ht}(I):=s-\dim(S/I)$. The $K$-linear space of polynomials in $S$  (resp. $I$) of degree at most $d$ is denoted by $S_{\leq d}$ (resp. $I_{\leq d}$). The function
$$
H_I^a(d):=\dim_K(S_{\leq d}/I_{\leq d}),\ \ \ d=0,1,2,\ldots,
$$
is  called the \textit{affine Hilbert function} of $S/I$. 
Let $u=t_{s+1}$ be a new variable and let $I^h\subseteq S[u]$  be the {\it homogenization\/} of $I$, where $S[u]$ is given the  standard grading.  One has the following two well-known facts
$$
\dim(S[u]/I^h)=\dim(S/I)+1\mbox{ and } H_I^a(d)=H_{I^h}(d)
\mbox{ for }d\geq 0,
$$
where $H_{I^h}(d)=\dim_K(S[u]/I^h)_d$. See for instance \cite[Lemma~8.5.4]{monalg-rev}. By a Hilbert theorem \cite[p.~58]{Sta1}, there is a unique polynomial $h^a_I(z)=\sum_{i=0}^{k}a_iz^i\in  \mathbb{Q}[z]$ of degree $k$ such that $h^a_I(d)=H_I^a(d)$ for $d\gg 0$. By convention the degree of the zero polynomial is $-1$. The integer $k!\, a_k$, denoted ${\rm deg}(S/I)$, is called the \textit{degree} of $S/I$.  The degree of $S/I$ is equal to  $\deg(S[u]/I^{h})$. If $k=0$, then $H_I^a(d)=\deg(S/I)=\dim_K(S/I)$ for $d\gg 0$. Note that the degree of $S/I$ is positive if $ I\subsetneq S$ and is $0$ otherwise. 

Let $F$ be a finite subset of $S$ and $X=\{P_1,\ldots,P_m\}$ a set of distinct points in the affine space
$\mathbb{A}^s$. The \textit{affine variety} of $F$ in $X$, denoted by $V_X(F)$, is the set of all $P \in X$ such that $f(P)=0$ for all $f\in F$.  The {\it colon ideal} is defined by  $$(I\colon(F)):=\{g\in S\, \vert\, gF\subseteq I \}.$$ The colon idea is an important tool to determine whether or not the affine variety $V_X(F)$ is non-empty (Lemma~\ref{vila-delio-feb27-20}).

We now recall some important properties.

\begin{lemma}{\cite[p.~389]{cocoa-book}}\label{primdec-ixx} 
Let $X$ be a finite subset of
$\mathbb{A}^s$, let $P$ be a point in $X$, $P=(p_1,\ldots,p_s)$, and
let $I_{P}$ be the vanishing ideal 
of $P$. Then $I_P$ is a maximal ideal of height $s$, 
\begin{equation*}
I_P=(t_1-p_1,\ldots,t_s-p_s),\ \deg(S/I_P)=1, 
\end{equation*}
and $I(X)=\bigcap_{P\in X}I_{P}$ is the primary 
decomposition of $I(X)$.  
\end{lemma}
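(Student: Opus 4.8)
The plan is to establish the four assertions in sequence, using the explicit description of $I_P$ as the engine that drives the rest. First I would prove $I_P=(t_1-p_1,\ldots,t_s-p_s)$. The inclusion $\supseteq$ is immediate, since each generator $t_i-p_i$ evaluates to $0$ at $P$. For $\subseteq$, take $f\in I_P$ and apply the change of variables $t_i\mapsto t_i+p_i$: the polynomial $g(t_1,\ldots,t_s):=f(t_1+p_1,\ldots,t_s+p_s)$ has constant term $g(0)=f(P)=0$, hence $g\in(t_1,\ldots,t_s)$, and undoing the substitution yields $f\in(t_1-p_1,\ldots,t_s-p_s)$. Equivalently, one divides $f$ by the $t_i-p_i$ via the division algorithm and observes the remainder is the constant $f(P)=0$.

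Next I would read off the remaining properties of $I_P$ from the isomorphism $S/I_P\cong K$ induced by evaluation at $P$, which is a surjective ring homomorphism with kernel exactly $I_P$. Since $K$ is a field, $I_P$ is a maximal ideal; since $S/I_P\cong K$ has Krull dimension $0$, we get ${\rm ht}(I_P)=s-\dim(S/I_P)=s$. Moreover $\dim_K(S/I_P)=1$ and $\dim(S/I_P)=0$, so by the remark recalled in the preliminaries the affine Hilbert function is eventually the constant $1$; thus $h^a_{I_P}$ is the constant polynomial $1$ and $\deg(S/I_P)=0!\cdot 1=1$.

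For the last assertion, $I(X)=\bigcap_{P\in X}I_P$ holds because $f$ vanishes on all of $X$ precisely when $f(P)=0$ for every $P\in X$, i.e.\ when $f\in I_P$ for every $P\in X$. To see this is \emph{the} primary decomposition, recall that every maximal ideal is primary, so each $I_P$ is a primary component with associated prime $\sqrt{I_P}=I_P$; distinct points of $X$ give distinct maximal ideals, so these associated primes are pairwise distinct. Irredundancy follows: if $\bigcap_{Q\in X,\,Q\neq P}I_Q\subseteq I_P$ for some $P$, then some $I_Q\subseteq I_P$ with $Q\neq P$, forcing $I_Q=I_P$ by maximality, a contradiction. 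Hence no component can be dropped, and the decomposition is irredundant. The only mildly technical step is the reverse inclusion $I_P\subseteq(t_1-p_1,\ldots,t_s-p_s)$ in the first paragraph; everything else is routine bookkeeping with maximal ideals and the Hilbert function conventions fixed above.
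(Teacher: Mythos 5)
Your proof is correct: the explicit description of $I_P$ via the translation/division argument, the identification $S/I_P\cong K$ giving maximality, height $s$, and $\deg(S/I_P)=1$ through the constant affine Hilbert function, and the irredundant primary decomposition via distinctness and maximality of the $I_P$ are all sound. Note that the paper does not prove this lemma at all but simply cites it from \cite[p.~389]{cocoa-book}; your argument is the standard textbook proof of that cited result, so there is no competing in-paper approach to compare against.
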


\begin{lemma}{\rm(cf. \cite[Lemma~3.3]{Bernal-Pitones-Villarreal})}\label{apr5-19}
Let $d_1,\ldots,d_s$ be positive integers and let $L$ be the ideal of $S$ generated by
$t_1^{d_1},\ldots,t_{s}^{d_s}$. If $t^a=t_1^{a_1}\cdots t_s^{a_s}$ is
not in $L$, then
$$
\deg(S/(L,t^a))=d_1\cdots d_s-(d_1-a_1)\cdots(d_s-a_{s}).
$$
\end{lemma}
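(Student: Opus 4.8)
The plan is to compute $\deg(S/(L,t^a))$ by showing that this ring has Krull dimension $0$, so that its degree equals its $K$-vector space dimension, and then count the standard monomials explicitly. First I would observe that $(L,t^a) = (t_1^{d_1},\ldots,t_s^{d_s},t^a)$ contains a power of each variable, hence $S/(L,t^a)$ is Artinian and $\dim_K(S/(L,t^a)) = \deg(S/(L,t^a))$ by the remarks on the affine Hilbert function recalled above (the case $k=0$). A $K$-basis for $S/(L,t^a)$ is given by the monomials $t^b = t_1^{b_1}\cdots t_s^{b_s}$ with $0 \le b_i \le d_i - 1$ for all $i$ (these are the standard monomials of $S/L$) that are moreover \emph{not} divisible by $t^a$.

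Next I would count these by inclusion–exclusion on the two ``forbidden'' conditions. The total number of monomials in the box $0 \le b_i \le d_i-1$ is $d_1\cdots d_s$. Among these, the ones divisible by $t^a$ are exactly those with $a_i \le b_i \le d_i - 1$ for each $i$; since $t^a \notin L$ we have $a_i \le d_i - 1$, so this range is nonempty and contains $(d_i - a_i)$ choices for each coordinate, giving $(d_1-a_1)\cdots(d_s-a_s)$ such monomials. Subtracting, the number of standard monomials of $S/(L,t^a)$ is $d_1\cdots d_s - (d_1-a_1)\cdots(d_s-a_s)$, which is the claimed value of the degree.

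The only point requiring a little care is justifying that the displayed monomials really do form a $K$-basis of $S/(L,t^a)$, i.e. that $(L,t^a)$ is a monomial ideal whose standard monomials are precisely the elements of the box not divisible by $t^a$; this is immediate since $L$ and $t^a$ are monomial, so $(L,t^a)$ is monomial and a monomial $t^b$ lies in it iff $t^b$ is a multiple of some $t_i^{d_i}$ or of $t^a$. Alternatively, one may deduce the formula directly from Lemma \ref{apr5-19}'s hypotheses via the short exact sequence $0 \to S/(L : t^a)(-\deg t^a) \xrightarrow{\cdot t^a} S/L \to S/(L,t^a) \to 0$, noting that $(L : t^a) = (t_1^{d_1-a_1},\ldots,t_s^{d_s-a_s})$ when $0 \le a_i \le d_i$, so that $\deg(S/L) = d_1\cdots d_s$ and $\deg(S/(L:t^a)) = (d_1-a_1)\cdots(d_s-a_s)$, and degrees are additive along such sequences in dimension $0$. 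I expect the main (minor) obstacle to be nothing more than keeping the bookkeeping of the inclusion–exclusion consistent with the hypothesis $t^a \notin L$, which guarantees $a_i < d_i$ for at least the relevant coordinates and hence that every factor $d_i - a_i$ that appears is nonnegative.
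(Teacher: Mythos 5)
Your proof is correct. Note that the paper itself does not prove this lemma; it is quoted from \cite[Lemma~3.3]{Bernal-Pitones-Villarreal}, so there is no in-paper argument to compare against. Your main argument (the quotient is Artinian, so its degree is its $K$-dimension, computed by counting the monomials in the box $0\le b_i\le d_i-1$ not divisible by $t^a$, using $t^a\notin L \Leftrightarrow a_i<d_i$ for all $i$) is sound, and your alternative via the exact sequence $0\to S/(L:t^a)\to S/L\to S/(L,t^a)\to 0$ with $(L:t^a)=(t_1^{d_1-a_1},\ldots,t_s^{d_s-a_s})$ is essentially the argument used in the cited reference.
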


\begin{lemma}{\rm \cite[Lemma 2.5]{Jaramillo-Vaz-pinto-Villarreal}}\label{vila-delio-feb27-20} Let $X$ be a finite subset of 
$\mathbb{A}^s$ over a field $K$ and let $F=\{f_1,\ldots,f_r\}$ be a
set of polynomials of $S$. Then, the following conditions are equivalent.
\begin{enumerate}
\item[(a)] $(I(X)\colon(F))=I(X)$.
\item[(b)] $V_X(F)=\emptyset$.
\item[(c)] $(I(X),F)=S$.
\end{enumerate}
\end{lemma}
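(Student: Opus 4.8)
I would establish the cyclic chain of implications (a)$\Rightarrow$(b)$\Rightarrow$(c)$\Rightarrow$(a). The one structural input is Lemma~\ref{primdec-ixx}: the primary decomposition $I(X)=\bigcap_{P\in X}I_P$ into pairwise comaximal maximal ideals. By the Chinese Remainder Theorem this identifies $S/I(X)$ with $K^m$ via $\bar g\mapsto(g(P_1),\dots,g(P_m))$ and, in particular, produces \emph{indicator polynomials} $e_P\in S$ for $P\in X$ with $e_P(Q)=1$ if $Q=P$ and $e_P(Q)=0$ if $Q\in X\setminus\{P\}$. Two trivial observations will be used throughout: $I(X)\subseteq(I(X)\colon(F))$ always holds, so (a) is the assertion that this inclusion is an equality; and for $P\in X$ one has $P\in V_X(F)$ if and only if $F\subseteq I_P$.

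For (a)$\Rightarrow$(b) I would argue by contraposition: if some $P\in V_X(F)$, then for every $f\in F$ the polynomial $e_Pf$ vanishes at each point of $X$ (away from $P$ because $e_P$ does, at $P$ because $f$ does), hence $e_P\in(I(X)\colon(F))$ while $e_P\notin I(X)$, so (a) fails. For (c)$\Rightarrow$(a) I would write $1=h+\sum_i a_if_i$ with $h\in I(X)$ and $a_i\in S$, and for $g\in(I(X)\colon(F))$ multiply through by $g$ to get $g=gh+\sum_i a_i(gf_i)$; here $gh\in I(X)$ since $h\in I(X)$, and $gf_i\in I(X)$ by the definition of the colon ideal, so $g\in I(X)$, which together with the automatic reverse inclusion gives (a).

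The implication (b)$\Rightarrow$(c) is the one needing genuine care, because over the non-algebraically-closed field $K$ it cannot be read off from the Nullstellensatz: a proper ideal of $S$ may have empty affine variety. I would instead transport the problem into $K^m$ through the CRT isomorphism. The image of $(I(X),F)$ in $K^m$ is the ideal generated by the vectors $v_i=(f_i(P_1),\dots,f_i(P_m))$, and an elementary computation identifies this with the set of all tuples supported on the coordinates $j$ for which $f_i(P_j)\neq 0$ for some $i$; when $V_X(F)=\emptyset$ this is all of $K^m$, and since $I(X)\neq S$ we conclude $(I(X),F)=S$. Equivalently, one can exhibit $1\in(I(X),F)$ explicitly: $1-\sum_{P\in X}e_P\in I(X)$, and for each $P\in X$, choosing $f\in F$ with $f(P)\neq 0$ gives $f(P)^{-1}e_Pf\equiv e_P\pmod{I(X)}$ with $f(P)^{-1}e_Pf\in(F)$, so each $e_P$, and hence $1$, lies in $(I(X),F)$. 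I expect this CRT step to be the only point in the argument where more than a formal manipulation is needed.
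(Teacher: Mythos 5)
Your argument is correct and complete. Note, however, that the paper does not prove this lemma at all: it is quoted verbatim from \cite[Lemma~2.5]{Jaramillo-Vaz-pinto-Villarreal}, so the only comparison available is with that source. The proof there works directly with the primary decomposition of Lemma~\ref{primdec-ixx}: one computes $(I(X)\colon(F))=\bigcap_{P\notin V_X(F)}I_P$ (since $(I_P\colon(F))=I_P$ when $F\not\subseteq I_P$ and $=S$ otherwise), so (a) holds exactly when $V_X(F)=\emptyset$, and one gets (b)$\Leftrightarrow$(c) from the fact that a proper ideal containing $I(X)=\bigcap_P I_P$ must be contained in some $I_P$, i.e.\ $(I(X),F)\subsetneq S$ iff $F\subseteq I_P$ for some $P\in X$. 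You instead route everything through the CRT isomorphism $S/I(X)\cong K^m$ and the indicator polynomials $e_P$ (whose existence also follows simply from the surjectivity of the evaluation map, which the paper states in the introduction). Both proofs rest on the same structural fact about $I(X)$; the colon-ideal computation is slightly slicker and gives the stronger intermediate identity $(I(X)\colon(F))=\bigcap_{P\notin V_X(F)}I_P$, while your idempotent argument is more explicit — it exhibits $1$ as an element of $(I(X),F)$ and handles (b)$\Rightarrow$(c) without any appeal to maximality or prime avoidance — and it meshes well with the indicator-function point of view the paper already invokes for dual codes. One cosmetic remark: in your CRT version of (b)$\Rightarrow$(c) the clause ``since $I(X)\neq S$'' is not needed; surjectivity of $(I(X),F)$ onto $S/I(X)$ together with $(I(X),F)\supseteq I(X)$ already forces $(I(X),F)=S$.
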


Let $\prec$ be a monomial order and
$\Delta_\prec(I)$ the set of standard monomials of $S/I$. 
The image of $\Delta_\prec(I)$, under the canonical 
map $S\mapsto S/I$, $x\mapsto \overline{x}$, is a basis of $S/I$ as a
$K$-vector space \cite[Proposition~6.52]{Becker-Weispfenning}. In particular, $H_{I}^{a}(d)$ is the number of standard monomials of $S/I$ of degree at most $d$. For the set of polynomials $F\subseteq S$ we set ${\rm in}_{\prec}(F):=\lbrace {\rm in}_{\prec}(f): f\in F\rbrace$. 
A subset $\mathcal{G}=\{g_1,\ldots, g_n\}$ of $I$ is called a 
{\it Gr\"obner basis\/} of $I$ if ${\rm
in}_\prec(I)=({\rm in}_\prec(g_1),\ldots,{\rm in}_\prec(g_n))$.

\begin{theorem}{\rm \cite[Theorem 2.12]{Jaramillo-Vaz-pinto-Villarreal}}\label{degree-initial-theorem}
Let $X$ be a finite subset of $\mathbb{A}^{s}$, let $I=I(X)$ be the vanishing ideal of $X$, and let $\prec$ be a monomial order. If $F$ is a finite set of polynomial of $S$ and $(I:(F))\neq I$, then
$$
|V_{X}(F)|={\rm deg}(S/\left( I,F\right) )\leq{\rm deg}(S/\left( {\rm in}_{\prec}(I),{\rm in}_{\prec}(F)\right) )\leq {\rm deg}(S/I )=|X|,
$$
and ${\rm deg}(S/\left( I,F\right) )<{\rm deg}(S/I )$ if $(F)\not\subseteq I$.
\end{theorem}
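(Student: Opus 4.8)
The plan is to build everything on the affine Hilbert function and the homogenization machinery recalled above, together with Lemma \ref{vila-delio-feb27-20}. First I would establish the central equality $|V_X(F)| = \deg(S/(I,F))$. Write $V_X(F) = \{Q_1,\dots,Q_r\}$; since $(I:(F)) \neq I$, Lemma \ref{vila-delio-feb27-20} guarantees $r \geq 1$. Using Lemma \ref{primdec-ixx}, $I = \bigcap_{P \in X} I_P$ with each $I_P$ maximal of degree $1$. I would then argue that $(I,F) = \bigcap_{P \in V_X(F)} I_P$: indeed $g \in (I,F)$ forces $g$ to vanish on every point of $X$ where all of $F$ vanishes, i.e. on $V_X(F)$, and conversely any polynomial vanishing on $V_X(F)$ lies in the radical of $(I,F)$, which equals $(I,F)$ because $I$ is radical (a finite intersection of maximal ideals) and adding $F$ keeps it radical — equivalently, $S/(I,F) \hookrightarrow \prod_{Q_i} S/I_{Q_i} = K^r$ is forced to be an isomorphism by a dimension count. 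Hence $\dim_K S/(I,F) = r$, and since $S/(I,F)$ has Krull dimension $0$ this dimension equals $\deg(S/(I,F))$ by the convention recalled in the preliminaries. This gives the leftmost equality.

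Next I would prove the first inequality $\deg(S/(I,F)) \leq \deg(S/(\mathrm{in}_\prec(I),\mathrm{in}_\prec(F)))$. The key input is the standard fact that passing to initial ideals does not change the affine Hilbert function: for any ideal $J$ one has $H^a_J(d) = H^a_{\mathrm{in}_\prec(J)}(d)$ for all $d$, because the standard monomials of $S/J$ of degree $\leq d$ form a $K$-basis of $S_{\leq d}/J_{\leq d}$ (by \cite[Proposition~6.52]{Becker-Weispfenning}, quoted above) and are by definition exactly the standard monomials of $S/\mathrm{in}_\prec(J)$ of degree $\leq d$. Applying this with $J = (I,F)$ gives $\deg(S/(I,F)) = \deg(S/\mathrm{in}_\prec(I,F))$. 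It then remains to observe the containment of monomial ideals $\mathrm{in}_\prec(I,F) \supseteq (\mathrm{in}_\prec(I), \mathrm{in}_\prec(F))$: any generator of the right-hand side is an initial term of an element of $(I,F)$, hence lies in $\mathrm{in}_\prec(I,F)$. A containment $J_1 \subseteq J_2$ of ideals with $\dim(S/J_i)=0$ forces $H^a_{J_1}(d) \geq H^a_{J_2}(d)$ and hence $\deg(S/J_1) \geq \deg(S/J_2)$; applied here this yields the desired inequality. The second inequality $\deg(S/(\mathrm{in}_\prec(I),\mathrm{in}_\prec(F))) \leq \deg(S/I)$ follows the same way from $\mathrm{in}_\prec(I) \subseteq (\mathrm{in}_\prec(I),\mathrm{in}_\prec(F))$ together with $\deg(S/I) = \deg(S/\mathrm{in}_\prec(I))$, and $\deg(S/I) = |X|$ is immediate from Lemma \ref{primdec-ixx} since $I$ is an intersection of $|X|$ distinct maximal ideals of degree $1$ (so $\dim_K S/I = |X|$).

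Finally, for the strict inequality under the hypothesis $(F) \not\subseteq I$: if some $f \in F$ does not vanish identically on $X$, then $V_X(F) \subsetneq X$, so $r = |V_X(F)| < |X| = \deg(S/I)$, and combining with the chain already established gives $\deg(S/(I,F)) < \deg(S/I)$. The main obstacle I anticipate is the careful justification that $(I,F)$ is radical and equals the intersection $\bigcap_{Q \in V_X(F)} I_Q$ — i.e. that no embedded or thickened structure appears when we add the equations $F$ to the (radical, zero-dimensional) ideal $I$; this is where one must use that $K$ is the coordinate ring of reduced points and that every ideal containing $I$ is automatically radical and is the vanishing ideal of the subset of $X$ it cuts out. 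Everything else is bookkeeping with Hilbert functions and monomial-ideal containments.
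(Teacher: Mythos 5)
Your overall route is sound and is essentially the same as the argument in the cited reference: reduce everything to $K$-vector space dimensions of zero-dimensional quotients, identify $S/(I,F)$ with $K^{|V_X(F)|}$, and compare with the monomial ideal $({\rm in}_\prec(I),{\rm in}_\prec(F))$ through the containment $({\rm in}_\prec(I),{\rm in}_\prec(F))\subseteq {\rm in}_\prec((I,F))$. Two remarks on the first step. Your key claim that every ideal containing $I$ is radical and cuts out a subset of $X$ is correct, but the phrase ``forced to be an isomorphism by a dimension count'' is close to circular: injectivity of $S/(I,F)\to\prod_{Q\in V_X(F)}S/I_Q$ is exactly the nontrivial containment $\bigcap_{Q\in V_X(F)}I_Q\subseteq (I,F)$, which the dimension count presupposes. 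The clean justification is comaximality: by Lemma~\ref{primdec-ixx} the ideals $I_P$, $P\in X$, are pairwise comaximal maximal ideals, so $(I,F)=\bigcap_{P\in X}\bigl(I_P+(F)\bigr)$, and $I_P+(F)$ equals $I_P$ if $P\in V_X(F)$ and equals $S$ otherwise; CRT then gives $S/(I,F)\cong K^{|V_X(F)|}$, hence $\deg(S/(I,F))=\dim_K S/(I,F)=|V_X(F)|$, with $V_X(F)\neq\emptyset$ coming from Lemma~\ref{vila-delio-feb27-20} as you say.

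The genuine flaw is the ``standard fact'' you invoke for the middle inequality: it is \emph{not} true for an arbitrary monomial order that $H^a_J(d)=H^a_{{\rm in}_\prec(J)}(d)$ for all $d$; this requires $\prec$ to be degree compatible, and the theorem allows any $\prec$. The result you cite from Becker--Weispfenning gives a basis of $S/J$, not of the truncations $S_{\leq d}/J_{\leq d}$. For instance, with lex $t_1\succ t_2$ and $J=(t_1-t_2^2)\subseteq K[t_1,t_2]$ one has ${\rm in}_\prec(J)=(t_1)$, $H^a_J(d)=2d+1$ but $H^a_{(t_1)}(d)=d+1$, so even $\deg(S/J)=2\neq 1=\deg(S/{\rm in}_\prec(J))$. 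Fortunately the repair is immediate in your situation because every ideal in sight contains the zero-dimensional ideal $I$ (respectively ${\rm in}_\prec(I)$), so degree equals $\dim_K$ of the quotient; Macaulay's theorem in the untruncated form you actually cite gives $\dim_K S/(I,F)=\dim_K S/{\rm in}_\prec((I,F))$, and then the containments $({\rm in}_\prec(I),{\rm in}_\prec(F))\subseteq{\rm in}_\prec((I,F))$ and ${\rm in}_\prec(I)\subseteq({\rm in}_\prec(I),{\rm in}_\prec(F))$ yield both inequalities. With that substitution (and the same remark applied to $\deg(S/I)=\deg(S/{\rm in}_\prec(I))$), your final strict-inequality argument from $(F)\not\subseteq I$, namely $V_X(F)\subsetneq X$, is fine.
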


We now recall the definition of the relative generalized Hamming weights of a linear code with respect to a proper subcode.

\begin{definition}\cite{Liu-Chen-Luo}
Let $C_{2}\subsetneq C_{1}$ two linear codes. For $r\in\lbrace1,\dots,{\rm dim}(C_{1})-{\rm dim}(C_{2})\rbrace$, the $r$-th relative generalized Hamming weight of $C_{1}$ with respect to $C_{2}$ is denoted and defined as 
$$
M_{r}(C_{1},C_{2}):=\min_{J\subseteq[n]}\lbrace |J|:{\rm dim}((C_{1})_{J})-{\rm dim}((C_{2})_{J})=r\rbrace,
$$
where $(C_{i})_{J}=\lbrace (c_{1},\dots,c_{n})\in C_{i}| c_{t}=0\text{ for } t\not\in J\rbrace$ for $i=1,2$.
\end{definition}

The {\it support} $\mathcal{X}(D)$ of a code $D \subseteq K^m$ is defined by the set 
$$
\mathcal{X}(D):=\lbrace i \mid \exists(a_{1},\dots,a_{m})\in D, a_{i}\neq 0\rbrace.
$$
The {\it support} $\mathcal{X}(\alpha)$ of a vector $\alpha\in K^{m}$ is the set of all non-zero entries of $\alpha$.

The following lemma gives us an alternative description for the relative generalized Hamming weights

\begin{lemma}\cite[Lemma 1]{Liu-Chen-Luo}\label{RGHW-description}
Let $C_{2}\subsetneq C_{1}$ be linear codes. For $r=1,\dots,{\rm dim}(C_{1})-{\rm dim}(C_{2})$, we have 
\begin{equation}\label{21.11.13}
M_{r}(C_{1},C_{2}):=\min\lbrace |\mathcal{X}(D)|: D\subseteq C_{1}; D\cap C_{2}=\lbrace0\rbrace,{\rm dim}(D)=r\rbrace,
\end{equation}
where $D\subseteq C_{1}$ represents that $D$ is a subcode of $C_1.$
\end{lemma}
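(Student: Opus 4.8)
\noindent Throughout, write $n$ for the common length of $C_1$ and $C_2$, and for $J\subseteq[n]$ abbreviate $d(J):=\dim((C_1)_J)-\dim((C_2)_J)$, so that the definition preceding the statement reads $M_r(C_1,C_2)=\min\{|J|:d(J)=r\}$; call this number $A$ and call the right-hand side of \eqref{21.11.13} $B$. The plan is to prove $B\le A$ and $A\le B$ separately. I will use freely that $C_2\subseteq C_1$ forces $(C_2)_J\subseteq (C_1)_J$ for every $J$, and that every word of $(C_1)_J$ is supported on $J$, so $\mathcal{X}((C_2)_J)\subseteq\mathcal{X}((C_1)_J)\subseteq J$.

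For $B\le A$, I would take an index set $J$ attaining $A$, so $|J|=A$ and $d(J)=r$, and choose a $K$-linear complement $D$ of $(C_2)_J$ inside $(C_1)_J$. Then I verify the three conditions needed for $D$ to compete in \eqref{21.11.13}: $\dim D=d(J)=r$; $\mathcal{X}(D)\subseteq J$ since $D\subseteq (C_1)_J$; and $D\cap C_2=\{0\}$, because any word lying in $D\cap C_2$ belongs to $(C_1)_J$, hence is a word of $C_2$ that vanishes outside $J$, i.e.\ an element of $(C_2)_J$, and $D$ meets $(C_2)_J$ trivially by construction. This yields $B\le |\mathcal{X}(D)|\le |J|=A$.

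For $A\le B$, I would take $D$ attaining $B$ and set $J:=\mathcal{X}(D)$, so $|J|=B$, $D\subseteq (C_1)_J$, and $D\cap (C_2)_J\subseteq D\cap C_2=\{0\}$; hence $(C_1)_J$ contains the internal direct sum $D\oplus (C_2)_J$ and therefore $d(J)\ge \dim D=r$. The one step that is not immediate is to descend from a set with $d(J)\ge r$ to a no-larger set $J'$ with $d(J')=r$ \emph{exactly}, so that it actually competes in the definition of $A$. For this I would use a coordinate-deletion estimate: for any $J'$ and $j\in J'$, passing from $(C_i)_{J'}$ to $(C_i)_{J'\setminus\{j\}}$ lowers the dimension by the rank of the $j$-th coordinate functional, i.e.\ by $0$ or $1$, and since $(C_2)_{J'}\subseteq (C_1)_{J'}$ the drop on the $C_1$ side is at least the drop on the $C_2$ side; hence $d(J'\setminus\{j\})\in\{d(J')-1,\,d(J')\}$. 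Deleting the elements of $J$ one at a time then produces a chain from $J$ (where $d\ge r$) to $\emptyset$ (where $d=0<r$, as $r\ge 1$) along which $d$ is non-increasing and moves by at most $1$ at each step, so a discrete intermediate-value argument gives an intermediate set $J'$ with $d(J')=r$, and then $A\le |J'|\le |J|=B$.

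The main obstacle is precisely that last reduction in the direction $A\le B$: the definition of $M_r$ insists that the dimension difference \emph{equal} $r$, whereas the support/direct-sum argument only delivers ``$\ge r$''. The coordinate-deletion monotonicity above is the device that reconciles the two; the remaining ingredients — the elementary properties of the punctured codes $(C_i)_J$ and the choice of a linear complement — are routine.
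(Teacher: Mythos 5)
Your argument is correct, but there is nothing in the paper to compare it with: the paper does not prove this lemma, it simply quotes it as Lemma~1 of the cited reference (Liu--Chen--Luo), so your write-up is a self-contained proof of an imported result. Both directions check out. For $B\le A$ the complement $D$ of $(C_2)_J$ inside $(C_1)_J$ does satisfy $D\cap C_2=\{0\}$ exactly by the argument you give (a word of $C_2$ supported inside $J$ lies in $(C_2)_J$), and $\mathcal{X}(D)\subseteq J$ gives the inequality. For $A\le B$ the genuine issue is, as you identify, that taking $J=\mathcal{X}(D)$ only yields $\dim((C_1)_J)-\dim((C_2)_J)\ge r$, and your coordinate-deletion lemma fixes this correctly: deleting $j$ from $J'$ replaces $(C_i)_{J'}$ by the kernel of the $j$-th coordinate functional, so each dimension drops by $0$ or $1$, and the containment $(C_2)_{J'}\subseteq(C_1)_{J'}$ forces the drop for $C_1$ to be at least that for $C_2$; hence the difference decreases by at most $1$ per deletion, and since it ends at $0<r$ on $\emptyset$, the discrete intermediate-value step produces $J'\subseteq J$ with difference exactly $r$. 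This is essentially the argument of the original source. The only point you leave implicit is that the set over which $B$ is minimized is nonempty (take an $r$-dimensional subspace of a complement of $C_2$ in $C_1$, which exists because $r\le\dim C_1-\dim C_2$); together with your two reductions this also shows both minima are taken over nonempty sets, so the equality is well posed.
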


\begin{remark}\label{case-of-gener-Ha-we}
Note that if $C_{2}=\lbrace0\rbrace$, then the $r$-th relative generalized Hamming weight of $C_{1}$ with respect to $C_{2}$ is equal to the $r$-th generalized Hamming weight of $C_{1}$. 
\end{remark}


\section{An algebraic representation}\label{RGHW-section} 
In this section, we algebraically describe the relative generalized Hamming weights of evaluation codes $\mathcal{L}_{X}^{2} \subsetneq \mathcal{L}_{X}^{1}$ in terms of the degree of $I=I(X).$ In addition, we provide a lower bound for these relative weights in terms of the footprint.

Given two linear subspaces $\mathcal{L}^{2}\subsetneq\mathcal{L}^{1}$ of $K\Delta_{\prec}(I)$, we denote by $\mathcal{L}^{1}/\mathcal{L}^{2}$ the quotient linear space of $\mathcal{L}^{1}$ and $\mathcal{L}^{2}$, and by $\mathcal{L}^{1}\setminus\mathcal{L}^{2}$ the set of elements of $\mathcal{L}^{1}$ that are not elements in $\mathcal{L}^{2}$. We denote by $\mathcal{R}_{r}$ as the set of all subsets $F=\lbrace f_{1},\dots,f_{r}\rbrace \subseteq \mathcal{L}^{1}$ of size $r$ such that the elements $\overline{f_{1}},\dots,\overline{f_{r}} \in \mathcal{L}^{1}/\mathcal{L}^{2}$ are linearly independent, where $\overline{f_{i}}:= f_i + \mathcal{L}^{2},$ for $i=1,\ldots, r.$
Let $\mathcal{R}_{\prec,r}$ be the set of all subsets $F=\lbrace f_{1},\dots,f_{r}\rbrace \subseteq \mathcal{L}^{1}\setminus\mathcal{L}^{2}$ of size $r$ such that ${\rm in}_{\prec}(f_{1}),\dots,{\rm in}_{\prec}(f_{r})$ are distinct monomials and $f_{i}$ is monic for $i=1,\dots,r$. 

\begin{lemma}\label{quotient-space-lemma}
Let $\mathcal{L}^{2}\subsetneq\mathcal{L}^{1}$ be two $K$-linear subspaces  of $K\Delta_{\prec}(I)\subseteq S$ of finite dimension, and let $F=\lbrace f_{1},\dots,f_{r}\rbrace$ be a subset of  $\mathcal{L}^{1}\setminus\lbrace0\rbrace$. Then, the following hold.
\begin{itemize}
\item[(a)] If $\overline{f_{1}},\dots,\overline{f_{r}}\subseteq\mathcal{L}^{1}/\mathcal{L}^{2}$ are linearly independent over $K$, then there is $G=\lbrace g_{1},\dots,g_{r}\rbrace$ $\subseteq \mathcal{L}^{1}\setminus\mathcal{L}^{2}$ such that $KF=KG$, ${\rm in}_{\prec}(g_{1}),\dots,{\rm in}_{\prec}(g_{r})$ are distinct, and ${\rm in}_{\prec}(f_{i})\succeq{\rm in}_{\prec}(g_{i})$ for all $i$.
\item[(b)] If ${\rm in}_{\prec}(f_{1}),\dots,{\rm in}_{\prec}(f_{r})$ are distinct, and $f_{1},\dots f_{r}\in\mathcal{L}^{1}\setminus\mathcal{L}^{2}$, then $\overline{f_{1}},\dots,\overline{f_{r}}$ are linearly independent over $K$.
\item[(c)] $\mathcal{R}_{\prec,r}\subseteq \mathcal{R}_{r}$, and if $F$ is in $\mathcal{R}_{r}$, then there is $G$ in $\mathcal{R}_{\prec,r}$ such that $KF=KG$.
\end{itemize}
\end{lemma}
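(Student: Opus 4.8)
The three parts are tightly linked, so the plan is to isolate one ``triangularity'' observation, deduce (a) and (b) from it, and then read off (c).

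For (b), the point is that pairwise distinct initial monomials preclude cancellation of leading terms. Given scalars $a_1,\dots,a_r\in K$ not all zero, let $j$ be the index with $a_j\neq 0$ for which ${\rm in}_{\prec}(f_j)$ is largest; since the ${\rm in}_{\prec}(f_i)$ are distinct, the monomial ${\rm in}_{\prec}(f_j)$ is not cancelled in $\sum_i a_i f_i$, so ${\rm in}_{\prec}\big(\sum_i a_i f_i\big)={\rm in}_{\prec}(f_j)$ and in particular $\sum_i a_i f_i\neq 0$. I would then rule out $\sum_i a_i f_i\in\mathcal{L}^2$: this is where I would invoke the structure of $\mathcal{L}^2$ in the present setting (it being a standard $K$-subspace and the $f_i$ lying outside it), which forces the initial monomial of a nonzero element of $\mathcal{L}^2$ to differ from every ${\rm in}_{\prec}(f_i)$. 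Granting this, $\sum_i a_i\overline{f_i}=\overline{\sum_i a_i f_i}\neq 0$ in $\mathcal{L}^1/\mathcal{L}^2$, which is exactly the asserted linear independence.

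For (a), I would first note that linear independence of $\overline{f_1},\dots,\overline{f_r}$ is equivalent to $KF\cap\mathcal{L}^2=\{0\}$, since a relation $\sum a_i f_i\in\mathcal{L}^2$ maps to $\sum a_i\overline{f_i}=0$, forcing all $a_i=0$; hence \emph{every} nonzero element of $KF$ already lies in $\mathcal{L}^1\setminus\mathcal{L}^2$. As $K\Delta_{\prec}(I)$ is finite dimensional, I would apply Gaussian elimination to $f_1,\dots,f_r$ relative to its monomial basis ordered by $\prec$, obtaining a basis $G=\{g_1,\dots,g_r\}$ with $KG=KF$ whose initial monomials are pairwise distinct; each $g_i$ is a nonzero element of $KF$, hence lies in $\mathcal{L}^1\setminus\mathcal{L}^2$. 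For the final clause, order the $g_i$ so that ${\rm in}_{\prec}(g_1)\succ\cdots\succ{\rm in}_{\prec}(g_r)$ and the $f_i$ so that ${\rm in}_{\prec}(f_1)\succeq\cdots\succeq{\rm in}_{\prec}(f_r)$, and let $V_i$ be the $K$-span of the standard monomials $m$ with $m\preceq{\rm in}_{\prec}(g_i)$. One checks that $KF\cap V_i=K\{g_i,\dots,g_r\}$, of dimension $r-i+1$. If ${\rm in}_{\prec}(f_i)\prec{\rm in}_{\prec}(g_i)$ for some $i$, then $f_i,\dots,f_r$ would be $r-i+1$ independent vectors lying in the $K$-span of the standard monomials strictly below ${\rm in}_{\prec}(g_i)$, hence inside $KF\cap V_i$, hence a basis of $K\{g_i,\dots,g_r\}$; but then $g_i$ would be a $K$-combination of $f_i,\dots,f_r$, and so none of the monomials of $g_i$ could equal ${\rm in}_{\prec}(g_i)$ --- a contradiction. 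Therefore ${\rm in}_{\prec}(f_i)\succeq{\rm in}_{\prec}(g_i)$ for all $i$ after this matching of indices.

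With (a) and (b) in hand, (c) is immediate: $\mathcal{R}_{\prec,r}\subseteq\mathcal{R}_r$ is exactly part (b) applied to an arbitrary $F\in\mathcal{R}_{\prec,r}$, and if $F\in\mathcal{R}_r$ then $\overline{f_1},\dots,\overline{f_r}$ are independent, so (a) supplies $G\subseteq\mathcal{L}^1\setminus\mathcal{L}^2$ with $KG=KF$ and distinct initial monomials; rescaling each $g_i$ to be monic keeps $KG=KF$ and membership in $\mathcal{L}^1\setminus\mathcal{L}^2$, so the rescaled $G$ lies in $\mathcal{R}_{\prec,r}$. I expect the genuine obstacle to be the last clause of (a): Gaussian elimination delivers distinct initial monomials and preserves the span effortlessly, but comparing the new initial monomials with the old ones index by index is what forces the dimension count above. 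I would also want to be careful in (b) about precisely which feature of the standard space $\mathcal{L}^2$ is used to exclude $\sum_i a_i f_i$ from $\mathcal{L}^2$.
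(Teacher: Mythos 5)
Your parts (a) and (c) are essentially correct, and (a) is argued by a genuinely different route from the paper: the paper proceeds by induction on $r$, replacing $f_i$ by $f_1-f_i$ whenever leading monomials collide, whereas you note that independence of the classes gives $KF\cap\mathcal{L}^2=\{0\}$ (so any basis of $KF$ automatically avoids $\mathcal{L}^2$), produce $G$ by Gaussian elimination on the monomial basis, and obtain the clause ${\rm in}_{\prec}(f_i)\succeq{\rm in}_{\prec}(g_i)$ from the identity $KF\cap V_i=K\lbrace g_i,\dots,g_r\rbrace$ of dimension $r-i+1$. That dimension count and the contradiction you draw from $g_i\in K\lbrace f_i,\dots,f_r\rbrace$ are valid (both arguments, like the paper's, tacitly re-index the $f_i$ in decreasing order of initial monomials, which is how the paper reads the statement as well), and your deduction of (c) from (a) and (b), including the monic rescaling, matches the paper.

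The genuine gap is in (b), at exactly the step you flag. To conclude you must exclude $\sum_i a_if_i\in\mathcal{L}^2$ for a nonzero coefficient vector, and you do so by asserting that the initial monomial of every nonzero element of $\mathcal{L}^2$ differs from every ${\rm in}_{\prec}(f_i)$. Nothing in the hypotheses yields this: $\mathcal{L}^2$ is an arbitrary subspace of $K\Delta_{\prec}(I)$, and the property fails even for subspaces spanned by standard monomials. Concretely, with $t_1\succ t_2$ both standard, take $\mathcal{L}^2=Kt_1\subsetneq\mathcal{L}^1=Kt_1+Kt_2$, $f_1=t_2$, $f_2=t_1+t_2$: these are monic, lie in $\mathcal{L}^1\setminus\mathcal{L}^2$, and have distinct initial monomials, yet $\overline{f_1}=\overline{f_2}$ in $\mathcal{L}^1/\mathcal{L}^2$. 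So the missing step cannot be supplied as stated; part (b), and with it the inclusion $\mathcal{R}_{\prec,r}\subseteq\mathcal{R}_r$ in (c), only becomes true if one adds the hypothesis you implicitly invoked, namely that no ${\rm in}_{\prec}(f_i)$ occurs as the initial monomial of a nonzero element of $\mathcal{L}^2$ (under which your leading-term argument closes immediately). For comparison, the paper's own proof of (b) has the same lacuna: it reduces the claim to linear independence of $f_1,\dots,f_r$ in $S$, which together with $f_i\notin\mathcal{L}^2$ is strictly weaker than independence of the classes modulo $\mathcal{L}^2$, as the same example shows. You located the weak point precisely, but the appeal to the unspecified ``structure of $\mathcal{L}^2$'' does not repair it.
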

\begin{proof}
(a) As $\overline{f_{1}},\dots,\overline{f_{r}}\subseteq\mathcal{L}^{1}/\mathcal{L}^{2}$ are linearly independent over $K$, then $f_{1}\dots,f_{r}$ are linearly independent over $K$. We proceed by induction on $r$. For the  case $r=1,$ take $g_1:=f_1.$ Assume that $r>1$. If it is needed, permute the $f_{i}$'s so we have that ${\rm in}_{\prec}(f_{1})\succeq\dots\succeq{\rm in}_{\prec}(f_{r})$.
\begin{itemize}
\item Assume ${\rm in}_{\prec}(f_{1})\succ{\rm in}_{\prec}(f_{2})$. By applying the induction hypothesis to the set $H=\lbrace f_{2},\dots,f_{r}\rbrace$, we obtain a set $G'=\lbrace g_{2}\dots,g_{r}\rbrace\subseteq \mathcal{L}^{1}\setminus\mathcal{L}^{2}$ such that $KH=KG'$, the monomials ${\rm in}_{\prec}(g_{2}),\dots,{\rm in}_{\prec}(g_{r})$ are distinct, and ${\rm in}_{\prec}(f_{i})\succeq{\rm in}_{\prec}(g_{i})$ for all $i\geq2$. Setting $g_{1} := f_{1}$ and $G := G'\cup\lbrace g_{1}\rbrace$, we get $KF=KG$, and the monomial ${\rm in}_{\prec}(g_{1})$ is distinct from ${\rm in}_{\prec}(g_{2}),\dots,{\rm in}_{\prec}(g_{r})$ because ${\rm in}_{\prec}(f_{1})\succ{\rm in}_{\prec}(f_{i})\succeq{\rm in}_{\prec}(g_{i})$ for $i\geq2.$
\item Assume that there is $k\geq2$ such that ${\rm in}_{\prec}(f_{1})={\rm in}_{\prec}(f_{i})$ for $i\leq k$ and ${\rm in}_{\prec}(f_{1})\succ{\rm in}_{\prec}(f_{i})$ for $i>k$. We set $h_{i} := f_{1}-f_{i}$ for $i=2,\dots,k$ and $h_{i} := f_{i}$ for $i=k+1,\dots,r$. Note that ${\rm in}_{\prec}(f_{1})\succ{\rm in}_{\prec}(h_{i})$ for $i\geq2$, $h_{2},\dots,h_{r}$ are in $\mathcal{L}^{1}\setminus\mathcal{L}^{2}$, and $h_{2},\dots,h_{r}$ are linearly independent over $K$. By applying the induction hypothesis to the set $H=\lbrace h_{2}\dots,h_{r}\rbrace$, we get a set $G'=\lbrace g_{2}\dots,g_{r}\rbrace\subseteq \mathcal{L}^{1}\setminus\mathcal{L}^{2}$ such that $KH=KG'$,  ${\rm in}_{\prec}(g_{2}),\dots,{\rm in}_{\prec}(g_{r})$ are distinct, and ${\rm in}_{\prec}(h_{i})\succeq{\rm in}_{\prec}(g_{i})$ for all $i\geq2$. Setting $g_{1} := f_{1}$ and $G := G'\cup\lbrace g_{1}\rbrace$, we get $KF=KG$, and the monomial ${\rm in}_{\prec}(g_{1})$ is distinct from ${\rm in}_{\prec}(g_{2}),\dots,{\rm in}_{\prec}(g_{r})$ because ${\rm in}_{\prec}(f_{1})\succ{\rm in}_{\prec}(h_{i})\succeq{\rm in}_{\prec}(g_{i})$ for $i\geq2$.
\end{itemize}

(b) By hypothesis, permuting if necessary, we have that ${\rm in}_{\prec}(f_{1})\succ\dots\succ{\rm in}_{\prec}(f_{r})$. Since $f_{1},\dots f_{r}$ $\in$ $\mathcal{L}^{1}\setminus\mathcal{L}^{2}$, it is sufficient to show that $f_{1},\dots f_{r}$ are linearly independent. Assume that $\sum_{i=1}^{r}\lambda_{i}f_{i}=0$, $\lambda_{i}\in K$ for all $i$. We proceed by contradiction assuming $\lambda_{1}=\dots=\lambda_{k-1}=0$ and $\lambda_{k}\neq0$ for some $k$. Setting $f=\sum_{i=k}^{r}\lambda_{i}f_{f}$, we get ${\rm in}_{\prec}(f)={\rm in}_{\prec}(f_{k})$ and $f\neq0$, a contradiction.

(c) This follows from (a) and (b).
\end{proof}

The following result describes how to compute the support of a code $D\subseteq K^m.$ For completeness, we also add a short proof.

\begin{lemma}{\cite[Lemma~2.1]{rth-footprint}}\label{seminar} Let $D\subseteq K^m$
be a subcode of dimension $r\geq 1$. If  
$\beta_1,\ldots,\beta_r$ is a $K$-basis for $D$ with
$\beta_i=(\beta_{i,1},\ldots,\beta_{i,m})$ for $i=1,\ldots,r$, then 
$\chi(D)=\bigcup_{i=1}^r\chi(\beta_i)$ and the number of elements of
$\chi(D)$ is the number of non-zero columns of the matrix:
$$   
\left[\begin{matrix}
\beta_{1,1}&\cdots&\beta_{1,i}&\cdots&\beta_{1,m}\\
\beta_{2,1}&\cdots&\beta_{2,i}&\cdots&\beta_{2,m}\\
\vdots&\cdots&\vdots&\cdots&\vdots\\
\beta_{r,1}&\cdots&\beta_{r,i}&\cdots&\beta_{r,m}
\end{matrix}\right].
$$
\end{lemma}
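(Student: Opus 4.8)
\textbf{Proof plan for Lemma \ref{seminar}.}

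The plan is to establish the two assertions separately: first the set equality $\chi(D)=\bigcup_{i=1}^r\chi(\beta_i)$, and then the statement that $|\chi(D)|$ equals the number of non-zero columns of the displayed matrix. For the first equality, I would prove both inclusions directly from the definition of support. The inclusion $\bigcup_{i=1}^r\chi(\beta_i)\subseteq\chi(D)$ is immediate, since each $\beta_i$ is itself an element of $D$, so any coordinate where some $\beta_i$ is non-zero is witnessed inside $D$. For the reverse inclusion, suppose $j\in\chi(D)$; then there exists $\alpha=(a_1,\dots,a_m)\in D$ with $a_j\neq 0$. Writing $\alpha=\sum_{i=1}^r\lambda_i\beta_i$ with $\lambda_i\in K$, the $j$-th coordinate is $a_j=\sum_{i=1}^r\lambda_i\beta_{i,j}\neq 0$, which forces $\beta_{i,j}\neq 0$ for at least one $i$, i.e. $j\in\chi(\beta_i)\subseteq\bigcup_{i=1}^r\chi(\beta_i)$.

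For the second assertion, observe that the $j$-th column of the matrix is the vector $(\beta_{1,j},\dots,\beta_{r,j})^{\mathsf T}$, so this column is non-zero precisely when $\beta_{i,j}\neq 0$ for some $i$, that is, precisely when $j\in\bigcup_{i=1}^r\chi(\beta_i)$. By the set equality just proved, this is exactly the condition $j\in\chi(D)$. Hence the indices of the non-zero columns are exactly the elements of $\chi(D)$, and counting them gives $|\chi(D)|$ as the number of non-zero columns.

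There is essentially no obstacle here: the lemma is a direct unwinding of the definitions of support, linear span, and matrix columns, and the proof is purely formal. The only point requiring a moment of care is noting that membership of $j$ in $\chi(D)$ is detected by \emph{some} basis element (not all of them), which is why the union — rather than an intersection — appears; this is handled by the implication "$\sum_i\lambda_i\beta_{i,j}\neq 0$ implies some $\beta_{i,j}\neq 0$" used above.
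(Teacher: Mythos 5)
Your proposal is correct and follows essentially the same argument as the paper: the inclusion $\bigcup_{i}\chi(\beta_i)\subseteq\chi(D)$ because each $\beta_i\in D$, and the reverse inclusion by expanding an arbitrary $\alpha\in D$ in the basis and noting that a non-zero coordinate of $\alpha$ forces some $\beta_{i,j}\neq 0$. Your explicit treatment of the column-count statement is a harmless addition that the paper leaves implicit.
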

\begin{proof}
Let $j\in[m]$ be an element in $\chi(D)$. There exists $a=(a_{1},\dots,a_{m})\in D$ and $k_{1},\dots,k_{r}\in K$ such that $a=k_{1}\beta_1+\cdots+k_{r}\beta_r$ and $a_{j}=k_{1}\beta_{1,j}+\cdots+k_{r}\beta_{r,j}\neq0$. Thus, for some $i\in[r],$ $\beta_{i,j}\neq0.$ This means that $j\in\chi(\beta_{i})\subseteq \bigcup_{i=1}^r\chi(\beta_i)$. The other contention $\bigcup_{i=1}^r\chi(\beta_i) \subseteq \chi(\beta_{i})$
follows from the fact that $\beta_{i}\in D$.
\end{proof}

We come to one of the main results of this section. The following theorem gives an algebraic description of the relative generalized Hamming weights of standard evaluation codes. This description is useful for computations with {\it Macaulay2}~\cite{mac2} and the coding theory package~\cite{cod_package}.

\begin{theorem}\label{relative-Hamming-weights}
Let $X$ be a subset of $\mathbb{A}^{s}$ and $I$ the vanishing ideal of $X$. Given $\mathcal{L}^{2}\subsetneq\mathcal{L}^{1}$ two linear subspaces of $K\Delta_{\prec}(I)$, let $\mathcal{L}^{2}_{X}\subsetneq\mathcal{L}^{1}_{X}$ be the standard evaluation codes on $X$ relative to $\prec$. Then, 
for $r=1,\dots,{\rm dim}(C_{1})-{\rm dim}(C_{2})$, the relative generalized Hamming weights are given by
$$
M_{r}(\mathcal{L}_{X}^{1},\mathcal{L}_{X}^{2})={\rm deg}(S/I)-\max\lbrace{\rm deg}\left( S/(I,F)\right)|F\in\mathcal{R}_{\prec,r}\rbrace.
$$
\end{theorem}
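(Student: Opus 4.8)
The plan is to connect the code-theoretic quantity $M_r(\mathcal{L}_X^1,\mathcal{L}_X^2)$, as described via supports by Lemma~\ref{RGHW-description}, with the algebraic quantity $\deg(S/(I,F))$ via Theorem~\ref{degree-initial-theorem}. The key dictionary is: for a subspace $D=(\mathcal{L})_X$ arising as the evaluation image of a linear subspace $\mathcal{L}\subseteq K\Delta_\prec(I)$, a coordinate $i$ lies outside $\chi(D)$ exactly when $P_i\in V_X(\mathcal{L})$, so $m-|\chi(D)|=|V_X(\mathcal{L})|$; and Theorem~\ref{degree-initial-theorem} identifies $|V_X(F)|$ with $\deg(S/(I,F))$ whenever $(I:(F))\neq I$ (equivalently $V_X(F)\neq\emptyset$, by Lemma~\ref{vila-delio-feb27-20}). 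Thus for a set $F=\{f_1,\dots,f_r\}$ one expects $|\chi((KF)_X)| = m - \deg(S/(I,F)) = \deg(S/I)-\deg(S/(I,F))$, and minimizing $|\chi|$ over admissible $D$ corresponds to maximizing $\deg(S/(I,F))$ over admissible $F$.

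First I would establish the $\leq$ direction. Take $F\in\mathcal{R}_{\prec,r}$ achieving the maximum of $\deg(S/(I,F))$. By Lemma~\ref{quotient-space-lemma}(c), $F\in\mathcal{R}_r$, so the images $\overline{f_1},\dots,\overline{f_r}$ are linearly independent in $\mathcal{L}^1/\mathcal{L}^2$; setting $D:=(KF)_X\subseteq\mathcal{L}^1_X$, one checks $D\cap\mathcal{L}^2_X=\{0\}$ and $\dim D=r$ using that evaluation restricted to $K\Delta_\prec(I)$ is injective (the standard monomials form a $K$-basis of $S/I\cong K^m$). Then $|\chi(D)| = m-|V_X(F)|$. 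If $V_X(F)=\emptyset$ then $\deg(S/(I,F))=0$ and $|\chi(D)|=m=\deg(S/I)-\deg(S/(I,F))$; if $V_X(F)\neq\emptyset$ then $(I:(F))\neq I$ by Lemma~\ref{vila-delio-feb27-20} and Theorem~\ref{degree-initial-theorem} gives $|V_X(F)|=\deg(S/(I,F))$. Either way $M_r(\mathcal{L}_X^1,\mathcal{L}_X^2)\leq|\chi(D)| = \deg(S/I)-\deg(S/(I,F))$, which is the claimed upper bound.

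For the reverse inequality, start from a subcode $D\subseteq\mathcal{L}^1_X$ with $D\cap\mathcal{L}^2_X=\{0\}$, $\dim D=r$, and $|\chi(D)|=M_r$. Pull $D$ back to a subspace $\mathcal{L}_D\subseteq\mathcal{L}^1$ of dimension $r$ whose image is $D$ (possible by injectivity of evaluation on $K\Delta_\prec(I)$); the condition $D\cap\mathcal{L}^2_X=\{0\}$ forces $\mathcal{L}_D\cap\mathcal{L}^2=\{0\}$, so any basis $F$ of $\mathcal{L}_D$ lies in $\mathcal{R}_r$, and by Lemma~\ref{quotient-space-lemma}(c) there is $G\in\mathcal{R}_{\prec,r}$ with $KG=KF=\mathcal{L}_D$. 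Then $(KG)_X=D$, so $V_X(G)$ is the complement of $\chi(D)$ in $[m]$ and $|\chi(D)|=m-|V_X(G)|$; applying Theorem~\ref{degree-initial-theorem} (or the trivial case $V_X(G)=\emptyset$) as before gives $|\chi(D)|=\deg(S/I)-\deg(S/(I,G))\geq\deg(S/I)-\max\{\deg(S/(I,F)):F\in\mathcal{R}_{\prec,r}\}$. Combining the two inequalities yields the theorem.

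The main obstacle I anticipate is bookkeeping around the boundary case $V_X(F)=\emptyset$, where Theorem~\ref{degree-initial-theorem} does not directly apply: one must separately verify that $\deg(S/(I,F))=\deg(S/S)=0$ there (since $(I,F)=S$ by Lemma~\ref{vila-delio-feb27-20}) so the formula $|\chi((KF)_X)|=\deg(S/I)-\deg(S/(I,F))$ still holds, and confirm that such $F$ can indeed occur when $M_r=m$. A secondary technical point is making the pullback of a subcode $D$ to a subspace of $\mathcal{L}^1$ canonical enough that the linear-independence and intersection conditions transfer cleanly; this rests entirely on the fact that $\mathrm{ev}$ is a $K$-isomorphism on $K\Delta_\prec(I)$, which is the cited basis property of standard monomials, so it should go through without friction.
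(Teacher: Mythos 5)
Your proposal is correct and follows essentially the same route as the paper's proof: it combines Lemma~\ref{RGHW-description}, the support description of Lemma~\ref{seminar}, the passage from $\mathcal{R}_r$ to $\mathcal{R}_{\prec,r}$ via Lemma~\ref{quotient-space-lemma}(c), and the identification $|V_X(F)|=\deg(S/(I,F))$ from Theorem~\ref{degree-initial-theorem}, merely organized as two inequalities instead of the paper's single chain of equalities. Your explicit treatment of the degenerate case $V_X(F)=\emptyset$ (where $(I,F)=S$ and $\deg(S/(I,F))=0$) is a detail the paper leaves implicit, and it is handled correctly.
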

\begin{proof}
Recall that by Equation (\ref{21.11.13}),
\[M_{r}(\mathcal{L}_{X}^{1},\mathcal{L}_{X}^{2})=\min\lbrace |\mathcal{X}(D)|: D\subseteq \mathcal{L}_{X}^{1}; D\cap \mathcal{L}_{X}^{2}=\lbrace0\rbrace,{\rm dim}(D)=r\rbrace.\]
Enumerate the points of $X=\{P_{1},\dots,P_{m} \}.$ Let $D$ be a subcode of $\mathcal{L}^{1}_{X}$ of dimension $r$ such that $D\cap\mathcal{L}^{2}_{X}=\lbrace0\rbrace$. The evaluation map induces an isomorphism of $K$-vector spaces between $\mathcal{L}^{1}$ and $\mathcal{L}^{1}_{X}$. Hence, by Lemma \ref{seminar}, there is a set $F=\lbrace f_{1},\dots,f_{r}\rbrace$ of linearly independent elements of $\mathcal{R}_{r}$, such that $D=\bigoplus_{i=1}^{r}K\alpha_{i}$, where $\alpha_{i} := (f_{i}(P_{1}),\dots,f_{i}(P_{m}))$, and the support $\mathcal{X}(D)$ is equal to $\bigcup_{i=1}^{r}\mathcal{X}(\alpha_{i}).$ Notice that $j\in\mathcal{X}(D)=\bigcup_{i=1}^{r}\mathcal{X}(\alpha_{i})$ if and only if  $j\in\mathcal{X}_{\alpha_{i}}$ for some $i$, if and only if $P_{j}\in X\setminus V_{X}(F)$. Therefore, we get 
$$
|\mathcal{X}(D)|=|X\setminus V_{X}(F)|.
$$
Conversely let $F=\lbrace f_{1}\dots,f_{r}\rbrace$ be a set of $\mathcal{R}_{r}$, then there is a subcode $D$ of $\mathcal{L}^{1}_{X}$ of dimension $r$ such that $D\cap\mathcal{L}_{X}^{2}=\lbrace0\rbrace$. Setting 
$$
\alpha_{i} := (f_{i}(P_{1}),\dots,f_{i}(P_{m}))\quad \text{for}\quad i=1,\dots,r,
$$
then $D=K\alpha_{i}+\dots+K\alpha_{r}$. Thus, we obtain that $|\mathcal{X}(D)|=|X\setminus V_{X}(F)|.$ For an element $F\in \mathcal{R}_{r}$, Lemma \ref{quotient-space-lemma} (c) implies that there is $G\in\mathcal{R}_{\prec,r}$ such that $KF=KG$. Thus, $V_{X}(F)=V_{X}(G)$, and we conclude that 
$$
\lbrace V_{X}(F)|F\in\mathcal{R}_{r}\rbrace\subseteq\lbrace V_{X}(F)|F\in\mathcal{R}_{\prec,r}\rbrace.
$$
On the other hand, by Lemma \ref{quotient-space-lemma} (c), we know that $\mathcal{R}_{\prec,r}\subseteq\mathcal{R}_{r}$, then the affine varieties defined by the elements of $\mathcal{R}_{r}$ and $\mathcal{R}_{\prec,r}$ are the same
$$
\lbrace V_{X}(F)|F\in\mathcal{R}_{r}\rbrace=\lbrace V_{X}(F)|F\in\mathcal{R}_{\prec,r}\rbrace.
$$
Hence, we obtain
\begin{align*}
M_{r}(\mathcal{L}^{1},\mathcal{L}^{2})&=\min\lbrace |\mathcal{X}(D)|:D\subseteq\mathcal{L}_{X}^{1};D\cap\mathcal{L}_{X}^{2}=\lbrace0\rbrace,{\rm dim}(D)=r\rbrace\\
&=\min\lbrace |X\setminus V_{X}(F)|: F\in\mathcal{R}_{r}\rbrace\\
&=|X|-\max\lbrace|V_{X}(F)|:F\in\mathcal{R}_{r}\rbrace\\
&={\rm deg}\left(S/I \right) -\max\lbrace|V_{X}(F)|:F\in\mathcal{R}_{\prec,r}\rbrace\\
&={\rm deg}\left(S/I \right) -\max\lbrace{\rm deg}\left(S/(I,F) \right):F\in\mathcal{R}_{\prec,r}\rbrace.
\end{align*}
Which proves the theorem.
\end{proof}

Note that if $\mathcal{L}^{2}=\lbrace0\rbrace$ and $\mathcal{L}_{\prec,r}$ denotes the set of all subsets $F=\lbrace f_{1},\dots,f_{r}\rbrace$ of size $r$ of $\mathcal{L}^{1}\setminus\lbrace0\rbrace$ such that ${\rm in}_{\prec}(f_{1}),\dots,{\rm in}_{\prec}(f_{r})$ are distinct monomials and $f_{i}$ is monic for $i=1,\dots,r,$ then $\mathcal{L}_{\prec,r}=\mathcal{R}_{\prec,r}$. Thus, as a consequence of Remark \ref{case-of-gener-Ha-we} and Theorem \ref{relative-Hamming-weights} we recover the following description for the generalized Hamming weights of evaluation codes.

\begin{corollary}{\cite[Theorem 3.4]{Jaramillo-Vaz-pinto-Villarreal}}
Let $X$ be a subset of $\mathbb{A}^s$, let $I$ be the vanishing
ideal of $X$, let $\mathcal{L}$ be a linear
subspace of $K\Delta_\prec(I)$, and let $\mathcal{L}_X$ be the standard evaluation code
on $X$ relative to $\prec$. Then 
$$
\delta_r(\mathcal{L}_X)=\deg(S/I)-\max\{\deg(S/(I,F))\:\vert\:\,
F\in\mathcal{L}_{\prec,r}\}\ \mbox{ for }\ 1\leq r\leq\dim_K(\mathcal{L}_X).
$$
\end{corollary}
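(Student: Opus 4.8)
The plan is to derive the corollary purely as a specialization of Theorem \ref{relative-Hamming-weights} to the case of a trivial subcode, so that no new argument is needed. First I would put $\mathcal{L}^{1}:=\mathcal{L}$ and $\mathcal{L}^{2}:=\lbrace 0\rbrace$; these are $K$-linear subspaces of $K\Delta_{\prec}(I)$, and since $r\geq 1$ forces $\dim_{K}(\mathcal{L}_{X})\geq 1$ we have $\mathcal{L}^{2}\subsetneq\mathcal{L}^{1}$, so the hypotheses of Theorem \ref{relative-Hamming-weights} are met. Writing $C_{1}:=\mathcal{L}^{1}_{X}=\mathcal{L}_{X}$ and $C_{2}:=\mathcal{L}^{2}_{X}=\lbrace 0\rbrace$, the evaluation map restricts to a $K$-linear isomorphism $\mathcal{L}\to\mathcal{L}_{X}$, because $\mathcal{L}\subseteq K\Delta_{\prec}(I)$ and the images of the standard monomials form a $K$-basis of $S/I$ \cite[Proposition~6.52]{Becker-Weispfenning}. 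Consequently $\dim(C_{1})-\dim(C_{2})=\dim_{K}(\mathcal{L}_{X})$, so the index range $1\leq r\leq \dim_{K}(\mathcal{L}_{X})$ of the corollary matches the range $1\leq r\leq \dim(C_{1})-\dim(C_{2})$ of the theorem.

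Next I would identify the two index families. Since $\mathcal{L}^{2}=\lbrace 0\rbrace$, one has $\mathcal{L}^{1}\setminus\mathcal{L}^{2}=\mathcal{L}\setminus\lbrace 0\rbrace$, and the conditions defining $\mathcal{R}_{\prec,r}$ — that $F=\lbrace f_{1},\dots,f_{r}\rbrace\subseteq\mathcal{L}\setminus\lbrace 0\rbrace$ has size $r$, that ${\rm in}_{\prec}(f_{1}),\dots,{\rm in}_{\prec}(f_{r})$ are distinct monomials, and that each $f_{i}$ is monic — are exactly, verbatim, the conditions defining $\mathcal{L}_{\prec,r}$ in the note preceding the corollary. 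Hence $\mathcal{R}_{\prec,r}=\mathcal{L}_{\prec,r}$ as sets; in particular, for $r$ in the stated range this family is nonempty precisely because $\mathcal{L}\setminus\lbrace 0\rbrace\neq\emptyset$.

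Finally I would invoke Remark \ref{case-of-gener-Ha-we}, which gives $M_{r}(\mathcal{L}_{X},\lbrace 0\rbrace)=\delta_{r}(\mathcal{L}_{X})$, and substitute both identifications into the conclusion of Theorem \ref{relative-Hamming-weights}:
$$
\delta_{r}(\mathcal{L}_{X})=M_{r}(\mathcal{L}^{1}_{X},\mathcal{L}^{2}_{X})={\rm deg}(S/I)-\max\lbrace{\rm deg}(S/(I,F))\mid F\in\mathcal{R}_{\prec,r}\rbrace={\rm deg}(S/I)-\max\lbrace{\rm deg}(S/(I,F))\mid F\in\mathcal{L}_{\prec,r}\rbrace.
$$
There is essentially no obstacle here: the proof is a one-line specialization, and the only point that deserves explicit mention is the bookkeeping check that $\mathcal{R}_{\prec,r}$ and $\mathcal{L}_{\prec,r}$ coincide when the subcode is $\lbrace 0\rbrace$ (and that this family is nonempty on the stated range of $r$), which is immediate from comparing the two definitions.
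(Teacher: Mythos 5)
Your proposal is correct and coincides with the paper's own derivation: the paper obtains the corollary exactly by setting $\mathcal{L}^{2}=\{0\}$, observing in the paragraph preceding the statement that $\mathcal{L}_{\prec,r}=\mathcal{R}_{\prec,r}$ in this case, and then combining Remark \ref{case-of-gener-Ha-we} with Theorem \ref{relative-Hamming-weights}. Your additional bookkeeping (the evaluation map restricting to an isomorphism on $\mathcal{L}\subseteq K\Delta_\prec(I)$, so the index ranges match) is a harmless and correct elaboration of the same specialization.
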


Observe that Theorem~\ref{relative-Hamming-weights} gives a description of relative generalized Hamming weights of standard evaluation codes. The following result, proved in \cite[Proposition 3.5.]{Jaramillo-Vaz-pinto-Villarreal} and then in \cite[Corollary 3.2]{Lopez-Soprunov-Villarreal}, implies that actually, Theorem~\ref{relative-Hamming-weights} can be used for any evaluation code, rather than only standard evaluation codes. For completeness, we also include a proof here.
\begin{proposition}\label{Standar-vs-evaluation-codes}
Let $\mathcal{L}_{X}$ be an evaluation code on $X$ and let $\prec$ be a monomial order. Then there exists a unique linear subspace $\widetilde{\mathcal{L}}$ of $K\Delta_{\prec}(I)$ such that $\widetilde{\mathcal{L}}_{X}=\mathcal{L}_{X}$.
\end{proposition}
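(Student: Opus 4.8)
The plan is to construct $\widetilde{\mathcal{L}}$ explicitly by reducing a basis of $\mathcal{L}$ modulo a Gr\"obner basis of $I$, and then argue both existence and uniqueness from the fact that the evaluation map induces an isomorphism $S/I \cong K^m$.

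\textbf{Existence.} Fix a Gr\"obner basis $\mathcal{G}$ of $I$ with respect to $\prec$. Let $f_1,\dots,f_k$ be a $K$-basis of $\mathcal{L}$. For each $i$, let $\widetilde{f_i}$ be the remainder of $f_i$ upon division by $\mathcal{G}$; then $\widetilde{f_i} \in K\Delta_\prec(I)$ and $f_i - \widetilde{f_i} \in I$, so $\mathrm{ev}(f_i) = \mathrm{ev}(\widetilde{f_i})$. Set $\widetilde{\mathcal{L}} := K\{\widetilde{f_1},\dots,\widetilde{f_k}\}$. First I would check that $\widetilde{f_1},\dots,\widetilde{f_k}$ are linearly independent: a relation $\sum \lambda_i \widetilde{f_i}=0$ gives $\sum \lambda_i f_i \in I$, and since the images of a $K$-basis of $S/I$ (equivalently, distinct elements of $\Delta_\prec(I)$) are linearly independent in $S/I$, while $\sum \lambda_i f_i$ is itself an element whose normal form is $\sum\lambda_i\widetilde{f_i}=0$, we get $\sum\lambda_i f_i = 0$ in $S/I$; but also $\sum\lambda_i f_i \in \mathcal L$ and the evaluation map is injective on $\mathcal L$ (as $\mathcal{L}_X$ is spanned by $\mathrm{ev}(f_i)$ — more precisely one uses that $\dim \mathcal L_X \le \dim\widetilde{\mathcal L}$ forces injectivity, or simply that the $f_i$ were chosen as a basis of $\mathcal{L}$ and $\sum \lambda_i f_i=0$ in $S/I$ combined with normal-form uniqueness forces all $\lambda_i=0$). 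Hence $\dim\widetilde{\mathcal{L}} = k = \dim\mathcal{L}$, and since $\mathrm{ev}(\widetilde{f_i}) = \mathrm{ev}(f_i)$ spans $\mathcal{L}_X$, we conclude $\widetilde{\mathcal{L}}_X = \mathcal{L}_X$.

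\textbf{Uniqueness.} Suppose $\widetilde{\mathcal{L}}$ and $\widetilde{\mathcal{L}}'$ are both linear subspaces of $K\Delta_\prec(I)$ with $\widetilde{\mathcal{L}}_X = \widetilde{\mathcal{L}}'_X$. The key point is that the restriction of $\mathrm{ev}$ to $K\Delta_\prec(I)$ is injective: indeed $\Delta_\prec(I)$ maps to a $K$-basis of $S/I \cong K^m$ under the canonical map (cited in the excerpt, via \cite[Proposition~6.52]{Becker-Weispfenning}), and $\mathrm{ev}$ factors as $S \to S/I \xrightarrow{\sim} K^m$. So $\mathrm{ev}$ is injective on all of $K\Delta_\prec(I)$, hence restricts to a $K$-linear isomorphism $K\Delta_\prec(I) \to K^m$. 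Under an isomorphism, a subspace is determined by its image: $\widetilde{\mathcal{L}} = \mathrm{ev}^{-1}(\mathcal{L}_X) \cap K\Delta_\prec(I) = \widetilde{\mathcal{L}}'$.

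\textbf{Main obstacle.} The only subtle point is making precise that $\mathrm{ev}$ is injective on $K\Delta_\prec(I)$, since a priori $\dim_K S/I = m$ only after passing to the quotient; but this is exactly the content of the cited fact that $\Delta_\prec(I)$ descends to a basis of $S/I$, together with the isomorphism $S/I\cong K^m$ noted at the start of the introduction. Once that injectivity is in hand, both existence (the normal-form construction lands in $K\Delta_\prec(I)$ without changing evaluations) and uniqueness (isomorphisms preserve and reflect subspaces) are immediate. I would therefore open the proof by recording this injectivity as the central observation and then dispatch the two halves in a few lines each.
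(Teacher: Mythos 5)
Your construction is essentially the paper's: reduce a $K$-basis of $\mathcal{L}$ modulo a Gr\"obner basis of $I$, let $\widetilde{\mathcal{L}}$ be the span of the remainders, and use that evaluations are unchanged since each $f_i-\widetilde{f_i}$ lies in $I$; the paper proves the two inclusions $\mathcal{L}_X\subseteq\widetilde{\mathcal{L}}_X$ and $\widetilde{\mathcal{L}}_X\subseteq\mathcal{L}_X$ in exactly this way. Your uniqueness argument --- the restriction of $\mathrm{ev}$ to $K\Delta_{\prec}(I)$ is injective because $\Delta_{\prec}(I)$ maps to a $K$-basis of $S/I\cong K^m$, so a subspace of $K\Delta_{\prec}(I)$ is determined by its image --- is correct, and is in fact more explicit than the paper's written proof, which only carries out the existence half.

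One caveat: your intermediate claim that $\widetilde{f_1},\dots,\widetilde{f_k}$ are linearly independent, hence $\dim\widetilde{\mathcal{L}}=\dim\mathcal{L}$, is false in general, and the assertion you lean on (that $\mathrm{ev}$ is injective on $\mathcal{L}$) is also false in general: nothing in the definition of an evaluation code prevents $\mathcal{L}\cap I(X)\neq\{0\}$ (for instance $t_1^q-t_1\in\mathcal{L}$ when $X=\mathbb{A}^s$ over $\mathbb{F}_q$), in which case some remainder is $0$ and $\dim\widetilde{\mathcal{L}}<\dim\mathcal{L}$. Fortunately this claim is not needed anywhere: since $\mathrm{ev}(\widetilde{f_i})=\mathrm{ev}(f_i)$ for all $i$, and these vectors span $\widetilde{\mathcal{L}}_X$ and $\mathcal{L}_X$ respectively, the equality $\widetilde{\mathcal{L}}_X=\mathcal{L}_X$ follows at once; simply delete the dimension count and the proof stands.
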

\begin{proof}
Let $X$ be a set of points
$\{P_1,\ldots,P_m\}$ of the affine space 
$\mathbb{A}^s$ and let $\mathcal{G}$ be a Gr\"obner basis of
$I=I(X)$. Pick a $K$-basis $\{f_1,\ldots,f_k\}$ of $\mathcal{L}$.
For each $i$, let $r_i$ be the remainder on division of $f_i$ by 
$\mathcal{G}$, that is, by the division algorithm \cite[Theorem~3,
p.~63]{CLO}. For each $i$, we can write $f_i=h_i+r_i$, where $h_i\in
I$, and $r_i=0$ or $r_i$ is a standard polynomial of $S/I$. We set
$$
\widetilde{\mathcal{L}}:=Kr_1+\cdots+Kr_k.
$$
\quad The evaluation code $\widetilde{\mathcal{L}}_X$ is a standard
evaluation code on $X$ relative to $\prec$ since
$\widetilde{\mathcal{L}}$ is a linear subspace of $K\Delta_\prec(I)$.
To show the inclusion
${\mathcal{L}}_X\subseteq\widetilde{\mathcal{L}}_X$ take a point $P$ in
${\mathcal{L}}_X$. Then, $P$ is equal to $(f(P_1),\ldots,f(P_m))$ for some
$f\in\mathcal{L}$. Using the equations $f_i=h_i+r_i$, $i=1,\ldots,k$,
we can write $f=h+r$, where $h\in I$ and
$r\in\widetilde{\mathcal{L}}$. Hence, $P$ is equal to
$(r(P_1),\ldots,r(P_m))$, that is, $P\in \widetilde{\mathcal{L}}_X$. 
By other hand, for 
$\widetilde{\mathcal{L}}_X\subseteq{\mathcal{L}}_X$ take a point $Q$ in 
$\widetilde{\mathcal{L}}_X$. Then, $Q$ is equal to $(g(P_1),\ldots,g(P_m))$ for some
$g\in\widetilde{\mathcal{L}}$. Using the equations $f_i=h_i+r_i$, $i=1,\ldots,k$,
we can write $g=g_1+g_2$, where $g_1\in I$ and
$g_2\in{\mathcal{L}}$. Hence, $Q$ is equal to
$(g_2(P_1),\ldots,g_2(P_m))$, that is, $Q\in{\mathcal{L}}_X$.
\end{proof}
\rmv{
\begin{corollary}\label{relative-linear-codes}
Let $C_{2}\subsetneq C_{1}$ be linear codes. Then there exists a set $X$,  and linear subspaces $\mathcal{L}^{2}\subset\mathcal{L}^{1}$ of $S$ such that  
$$
M_{r}(C_{1},C_{2})=M_{r}(\mathcal{L}^{1}_{X},\mathcal{L}^{2}_{X})={\rm deg}(S/I)-\max\lbrace{\rm deg}\left( S/(I,F)\right)|F\in\mathcal{R}_{\prec,r}\rbrace.
$$
\end{corollary}

\begin{proof}
This follows from Proposition \ref{Standar-vs-evaluation-codes}, Proposition \ref{linear-code-then-evalu-code}, and Theorem \ref{relative-Hamming-weights}
\end{proof}
}

The purpose of the following lines is to give a bound for the relative generalized Hamming weights of evaluation codes. Let $\mathcal{M}_{\prec,r}$ be the family of all subsets $M$ of ${\rm in}_{\prec}(\mathcal{L}^{1}\setminus\mathcal{L}^{2})=\lbrace{\rm in}_{\prec}(f): f\in \mathcal{L}^{1}\setminus\mathcal{L}^{2}\rbrace$ with $r$ distinct elements.

\begin{definition}
The $r$-th {\it relative footprint} of the standard evaluation codes $\mathcal{L}_{X}^{2}\subsetneq\mathcal{L}_{X}^{1}$ is denoted and defined by
$$
{\rm RFP}_{r}(\mathcal{L}_{X}^{1},\mathcal{L}_{X}^{2}):={\rm deg}\left(S/I \right) -\max\lbrace{\rm deg}\left(S/({\rm in}_{\prec}(I),M) \right):M\in\mathcal{M}_{\prec,r}\rbrace.
$$
\end{definition} 

\begin{theorem}\label{foorprint-relative-gen-weight}
Let $X$ be a subset of $\mathbb{A}^{s}$ and $I$ the vanishing ideal of $X$. Given $\mathcal{L}^{2}\subsetneq\mathcal{L}^{1}$ two linear subspaces of $K\Delta_{\prec}(I)$, let $\mathcal{L}^{2}_{X}\subsetneq\mathcal{L}^{1}_{X}$ be the standard evaluation codes on $X$ relative to $\prec$. Then,
\begin{equation}\label{ineq-foot}
{\rm RFP}_{r}(\mathcal{L}_{X}^{1},\mathcal{L}_{X}^{2})\leq M_{r}(\mathcal{L}_{X}^{1},\mathcal{L}_{X}^{2})\quad\text{for}\quad 1\leq r\leq {\rm dim}_{K}(\mathcal{L}_{X}^{1})-{\rm dim}_{K}(\mathcal{L}_{X}^{2}).
\end{equation}
\end{theorem}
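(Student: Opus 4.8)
The plan is to compare, term by term, the two maxima occurring in Theorem~\ref{relative-Hamming-weights} and in the definition of the relative footprint. Since ${\rm deg}(S/I)$ is a common additive constant, the inequality (\ref{ineq-foot}) is equivalent to
$$
\max\lbrace{\rm deg}(S/(I,F)):F\in\mathcal{R}_{\prec,r}\rbrace\ \leq\ \max\lbrace{\rm deg}(S/({\rm in}_{\prec}(I),M)):M\in\mathcal{M}_{\prec,r}\rbrace ,
$$
so it is enough to attach to each $F\in\mathcal{R}_{\prec,r}$ some $M\in\mathcal{M}_{\prec,r}$ with ${\rm deg}(S/(I,F))\leq{\rm deg}(S/({\rm in}_{\prec}(I),M))$ and then maximize over $F$. (Both families are nonempty in the stated range $1\leq r\leq\dim_K(\mathcal{L}^1_X)-\dim_K(\mathcal{L}^2_X)$: $\mathcal{R}_{\prec,r}\neq\emptyset$ by Lemma~\ref{quotient-space-lemma}(c), and any of its elements will be seen below to produce a member of $\mathcal{M}_{\prec,r}$.)

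First I would fix $F=\lbrace f_{1},\dots,f_{r}\rbrace\in\mathcal{R}_{\prec,r}$ and take the obvious candidate $M:={\rm in}_{\prec}(F)=\lbrace{\rm in}_{\prec}(f_{1}),\dots,{\rm in}_{\prec}(f_{r})\rbrace$. By the definition of $\mathcal{R}_{\prec,r}$ the monomials ${\rm in}_{\prec}(f_{i})$ are pairwise distinct and each $f_{i}$ lies in $\mathcal{L}^{1}\setminus\mathcal{L}^{2}$, so $M$ is a set of $r$ distinct elements of ${\rm in}_{\prec}(\mathcal{L}^{1}\setminus\mathcal{L}^{2})$, that is, $M\in\mathcal{M}_{\prec,r}$. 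Now I split into two cases according to the colon ideal $(I:(F))$. If $(I:(F))\neq I$, then Theorem~\ref{degree-initial-theorem} applies to the finite set $F$ and gives ${\rm deg}(S/(I,F))\leq{\rm deg}(S/({\rm in}_{\prec}(I),{\rm in}_{\prec}(F)))={\rm deg}(S/({\rm in}_{\prec}(I),M))$, which is precisely its middle inequality. If instead $(I:(F))=I$, then Lemma~\ref{vila-delio-feb27-20} yields $(I,F)=S$, whence ${\rm deg}(S/(I,F))=0\leq{\rm deg}(S/({\rm in}_{\prec}(I),M))$, the latter degree being nonnegative. In either case ${\rm deg}(S/(I,F))\leq{\rm deg}(S/({\rm in}_{\prec}(I),M))\leq\max\lbrace{\rm deg}(S/({\rm in}_{\prec}(I),M')):M'\in\mathcal{M}_{\prec,r}\rbrace$, and taking the maximum over all $F\in\mathcal{R}_{\prec,r}$ gives the displayed inequality of maxima, hence (\ref{ineq-foot}).

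The argument is short once Theorem~\ref{relative-Hamming-weights} and the footprint bound of Theorem~\ref{degree-initial-theorem} are in hand; its only content is the choice $M={\rm in}_{\prec}(F)$ and the verification that it lies in $\mathcal{M}_{\prec,r}$. The single point requiring care is the degenerate case $(I:(F))=I$ (equivalently $V_X(F)=\emptyset$), in which Theorem~\ref{degree-initial-theorem} does not apply and one must observe separately that ${\rm deg}(S/(I,F))=0$; beyond this I expect no obstacle.
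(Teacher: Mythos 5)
Your proposal is correct and follows essentially the same route as the paper: reduce via Theorem~\ref{relative-Hamming-weights} to comparing the two maxima, take $M={\rm in}_{\prec}(F)\in\mathcal{M}_{\prec,r}$ for each $F\in\mathcal{R}_{\prec,r}$, and invoke the footprint inequality of Theorem~\ref{degree-initial-theorem}. Your separate treatment of the degenerate case $(I\colon(F))=I$, where that theorem's hypothesis fails and one instead notes $\deg(S/(I,F))=0$ via Lemma~\ref{vila-delio-feb27-20}, is a point the paper's proof passes over silently, so this extra care is a small improvement rather than a deviation.
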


\begin{proof}
By Theorem \ref{relative-Hamming-weights}, we have that
$$
M_{r}(\mathcal{L}_{X}^{1},\mathcal{L}_{X}^{2})={\rm deg}(S/I)-\max\lbrace{\rm deg}\left( S/(I,F)\right)|F\in\mathcal{R}_{\prec,r}\rbrace.
$$
Then we need to show only that 
$$
{\rm deg}\left(S/I \right) -\max\lbrace{\rm deg}\left(S/({\rm in}_{\prec}(I),M) \right):M\in\mathcal{M}_{\prec,r}\rbrace\leq {\rm deg}(S/I)-\max\lbrace{\rm deg}\left( S/(I,F)\right)|F\in\mathcal{R}_{\prec,r}\rbrace.
$$
Previous inequality is equivalent to 
$$
\max\lbrace{\rm deg}\left( S/(I,F)\right)|F\in\mathcal{R}_{\prec,r}\rbrace\leq \max\lbrace{\rm deg}\left(S/({\rm in}_{\prec}(I),M) \right):M\in\mathcal{M}_{\prec,r}\rbrace.
$$
This inequality follows from the fact that for each $F\in\mathcal{R}_{\prec,r}$, ${\rm in}_{\prec}(F)\in \mathcal{M}_{\prec,r}.$ Thus, from Theorem  \ref{degree-initial-theorem}, we obtain
$$
{\rm deg}(S/\left( I,F\right) )\leq{\rm deg}(S/\left( {\rm in}_{\prec}(I),{\rm in}_{\prec}(F)\right) ).
$$
Which proves the result.
\end{proof}

The  bound in Theorem \ref{foorprint-relative-gen-weight} is sharp. Indeed, there exist families of evaluation code where the equality holds. For instance, for the affine Cartesian codes, which we describe now. Let $d_{1}\leq\dots\leq d_{s}$ be positive integers and $A_{1},\dots,A_{s}$ a sequence of subsets of $K$ with cardinalities $d_{1},\dots,d_{s},$ respectively. Denote by $\mathcal{A}$ the Cartesian product $\mathcal{A}:=A_{1}\times\dots\times A_{s}$. Note that $m=|\mathcal{A}|=d_{1}\times\dots\times d_{s}$. Enumerate the elements of $\mathcal{A}=\{P_{1},\dots,P_{m}\}.$ For a positive integer $d\leq \sum_{i=1}^{s}(d_{i}-1)$, we define the subspace 
$$
S_{\leq d}(\mathcal{A}):=\lbrace f\in S :\deg_{t_{i}}(f)<d_{i}, \deg(f)\leq d\rbrace.
$$
The map 
$$
{\rm ev}_\mathcal{A}:S_{\leq d}(\mathcal{A})\rightarrow K^{m}, \quad
f\mapsto(f(P_{1}),\dots,f(P_{m})),
$$
is a linear map. The image $C_{\mathcal{A}}(d) := {\rm ev}(S_{\leq d}(\mathcal{A}))$ is called the {\it affine Cartesian code} of degree $d$.
The relative generalized Hamming weights of an affine Cartesian code with respect to a smaller affine Cartesian code have been computed by M. Datta {\rm \cite{datta}} in the following way. Take the set
$$
\mathcal{F} := \lbrace 0,\dots,d_{1}-1\rbrace\times\dots\times\lbrace 0,\dots,d_{s}-1\rbrace.
$$
For an element ${\bf a}=(a_{1},\dots,a_{s})\in \mathcal{F}$, we define ${\rm deg}({\bf a}) := a_{1}+\cdots+a_{s}$. We introduce some subsets of $\mathcal{F}$ consisting of elements satisfying certain degree constraints. For any integer $d$, define 
$$
\mathcal{F}_{\leq d}:=\lbrace {\bf a}\in \mathcal{F}: \deg({\bf a})\leq d\rbrace. 
$$
On a similar way, for integers $d_{1}$, $d_{2}$ satisfying $d_{2}<d_{1}$, we define 
$$
\mathcal{F}_{d_{2}}^{d_{1}}:=\lbrace {\bf a}\in \mathcal{F}: d_{2}<\deg({\bf a})\leq d_{1}\rbrace. 
$$
\begin{theorem}{\rm \cite[Theorem 4.3, Corollary 3.10]{datta}}
Fix integers $d_{1},d_{2}$ with $-1\leq d_{2}<d_{1}\leq\sum_{i=1}^{m}(d_{i}-1)$. Let $C_{\mathcal{A}}(d_{1})$ and $C_{\mathcal{A}}(d_{2})$ denote the corresponding affine Cartesian codes. For any integer $1\leq r\leq\dim(C_{\mathcal{A}}(d_{1}))-\dim(C_{\mathcal{A}}(d_{2}))$, the $r$-th RGHW and relative footprint bound of $C_{\mathcal{A}}(d_{1})$ with respect to $C_{\mathcal{A}}(d_{2})$ is given by 
$$
M_{r}(C_{\mathcal{A}}(d_{1}),C_{\mathcal{A}}(d_{2}))=d_{1}\cdots d_{s}-\sum_{i=1}^{s}a_{r,i}\prod_{j=i+1}^{s}d_{j}-t+r={\rm RFP}_{r}(C_{\mathcal{A}}(d_{1}),C_{\mathcal{A}}(d_{2})),
$$
where $(a_{r,1},\dots,a_{r,s})$ is the r-th element of $\mathcal{F}_{d_{2}}^{d_{1}}$ and t-th element of $\mathcal{F}_{\leq d_{1}}$ in descending lexicographic order.
\end{theorem}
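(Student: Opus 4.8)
The plan is to deduce the theorem from the machinery of Section~\ref{RGHW-section}, exploiting that affine Cartesian codes are standard evaluation codes with a transparent footprint. First I would fix notation: the vanishing ideal $I:=I(\mathcal{A})$ of $\mathcal{A}=A_1\times\cdots\times A_s$ is the complete intersection $(g_1,\dots,g_s)$ with $g_i:=\prod_{a\in A_i}(t_i-a)$, and with respect to any monomial order ${\rm in}_\prec(I)=L:=(t_1^{|A_1|},\dots,t_s^{|A_s|})$, so $\Delta_\prec(I)=\{t^{\mathbf a}:\mathbf a\in\mathcal F\}$. Reducing a monomial basis of $S_{\le d}(\mathcal A)$ modulo $\{g_1,\dots,g_s\}$ and applying Proposition~\ref{Standar-vs-evaluation-codes}, one sees that $C_{\mathcal A}(d)$ is the standard evaluation code of $\mathcal L^{(d)}:=K\{t^{\mathbf a}:\mathbf a\in\mathcal F_{\le d}\}$; hence, with $\mathcal L^1:=\mathcal L^{(d_1)}$ and $\mathcal L^2:=\mathcal L^{(d_2)}$, we have $\mathcal L^1=\mathcal L^2\oplus K\{t^{\mathbf a}:\mathbf a\in\mathcal F_{d_2}^{d_1}\}$. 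By Theorems~\ref{relative-Hamming-weights} and~\ref{foorprint-relative-gen-weight} it then suffices to (i) compute ${\rm RFP}_r(C_{\mathcal A}(d_1),C_{\mathcal A}(d_2))$ and check it equals the displayed number, and (ii) exhibit, for each admissible $r$, a set $F\in\mathcal R_{\prec,r}$ with $\deg(S/(I,F))=\deg(S/I)-{\rm RFP}_r$; the latter gives $M_r\le{\rm RFP}_r$, which together with Theorem~\ref{foorprint-relative-gen-weight} yields equality.

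For (i) I would fix the descending lexicographic order and first show ${\rm in}_\prec(\mathcal L^1\setminus\mathcal L^2)=\{t^{\mathbf a}:\mathbf a\in\mathcal F_{\le d_1},\ t^{\mathbf a}\succeq t^{\mathbf b}\text{ for some }\mathbf b\in\mathcal F_{d_2}^{d_1}\}$: one inclusion by writing $f=f_2+w$ with $w\ne 0$ in the complementary summand (its support disjoint from that of $f_2$), the other by testing with the binomials $t^{\mathbf a}+t^{\mathbf b}$. For $M=\{t^{\mathbf a_1},\dots,t^{\mathbf a_r}\}\in\mathcal M_{\prec,r}$, the standard monomials of $S/(L,M)$ are the $t^{\mathbf c}$ with $\mathbf c\in\mathcal F$ not coordinatewise above any $\mathbf a_j$, so (using Lemma~\ref{apr5-19} for single boxes and inclusion--exclusion) $\deg(S/I)-\deg(S/(L,M))=\bigl|\bigcup_{j=1}^r\{\mathbf c\in\mathcal F:\mathbf c\ge\mathbf a_j\}\bigr|$, whence ${\rm RFP}_r$ is the minimum of this cardinality over $M\in\mathcal M_{\prec,r}$. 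I would then prove that this minimum is attained by taking $\mathbf a_1\succ\cdots\succ\mathbf a_r$ to be the descending-lexicographic initial segment of $\mathcal F_{d_2}^{d_1}$, and that for this choice $\bigl|\bigcup_{j}\{\mathbf c\in\mathcal F:\mathbf c\ge\mathbf a_j\}\bigr|=|\{\mathbf c\in\mathcal F:\deg(\mathbf c)>d_2,\ \mathbf c\succeq\mathbf a_r\}|$. A telescoping count giving $|\{\mathbf c\in\mathcal F:\mathbf c\succeq\mathbf a\}|=|\mathcal A|-\sum_i a_i\prod_{j>i}|A_j|$ and $|\{\mathbf c\in\mathcal F_{\le d_2}:\mathbf c\succ\mathbf a_r\}|=t-r$ then turns the right-hand side into $|\mathcal A|-\sum_i a_{r,i}\prod_{j>i}|A_j|-t+r$, the stated formula.

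For (ii), with $\mathbf a_1\succ\cdots\succ\mathbf a_r$ the optimal vectors from (i) (each of degree $>d_2$, since it lies in $\mathcal F_{d_2}^{d_1}$), I would fix an ordering of each $A_i$, let $B_{j,i}\subseteq A_i$ be the first $a_{j,i}$ elements, and set $f_j:=\prod_{i=1}^s\prod_{b\in B_{j,i}}(t_i-b)$. Each $f_j$ is monic with ${\rm in}_\prec(f_j)=t^{\mathbf a_j}$; every monomial of $f_j$ has exponent $\le\mathbf a_j$ coordinatewise and hence lies in $\mathcal F_{\le d_1}$, while $f_j\notin\mathcal L^2$ because $\deg(\mathbf a_j)>d_2$, so $F:=\{f_1,\dots,f_r\}\in\mathcal R_{\prec,r}$. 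A direct computation identifies $\mathcal A\setminus V_{\mathcal A}(F)$ with $\bigcup_j\{\mathbf c\in\mathcal F:\mathbf c\ge\mathbf a_j\}$, so $|V_{\mathcal A}(F)|=|\mathcal A|-{\rm RFP}_r$; by Theorem~\ref{degree-initial-theorem}, $\deg(S/(I,F))=|V_{\mathcal A}(F)|=\deg(S/I)-{\rm RFP}_r$, as needed, and parts (i) and (ii) match by construction.

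The hard part will be the extremal combinatorics in step (i): showing that the descending-lexicographic initial segment of $\mathcal F_{d_2}^{d_1}$ minimizes $\bigl|\bigcup_j\{\mathbf c\in\mathcal F:\mathbf c\ge\mathbf a_j\}\bigr|$ over $\mathcal M_{\prec,r}$, and establishing the identity with $|\{\mathbf c\in\mathcal F:\deg(\mathbf c)>d_2,\ \mathbf c\succeq\mathbf a_r\}|$. One containment of the latter is immediate, since coordinatewise domination implies lexicographic domination and does not decrease degree; the reverse containment, together with the minimality claim, should follow from a shifting/compression argument on exponent vectors of the type used by Datta, but this is the step that requires genuine care.
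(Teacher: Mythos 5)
First, a point of comparison: the paper does not prove this statement at all --- it is quoted verbatim from Datta \cite{datta} (his Theorem 4.3 and Corollary 3.10) as an illustration that the bound of Theorem \ref{foorprint-relative-gen-weight} can be attained. So your task was effectively to reconstruct Datta's proof inside the framework of Section \ref{RGHW-section}, and your architecture for doing so is sound: identifying $C_{\mathcal A}(d)$ as the standard evaluation code of $K\{t^{\mathbf a}:\mathbf a\in\mathcal F_{\le d}\}$, reducing via Theorems \ref{relative-Hamming-weights} and \ref{foorprint-relative-gen-weight} to a monomial computation, and your step (ii) (the polynomials $f_j=\prod_i\prod_{b\in B_{j,i}}(t_i-b)$, whose non-vanishing loci correspond under the coordinatewise bijection $\mathcal A\leftrightarrow\mathcal F$ to the up-sets of the $\mathbf a_j$) correctly produce $F\in\mathcal R_{\prec,r}$ showing $M_r\le |\mathcal A|-\sum_i a_{r,i}\prod_{j>i}d_j-t+r$, and likewise $\mathrm{RFP}_r\le$ this number. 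The telescoping counts and the identity $\bigl|\bigcup_j\{\mathbf c\ge\mathbf a_j\}\bigr|=|\{\mathbf c\in\mathcal F:\deg(\mathbf c)>d_2,\ \mathbf c\succeq\mathbf a_r\}|$ for the lexicographic initial segment also check out.

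The genuine gap is the matching lower bound, which you explicitly defer: the claim that the descending-lexicographic initial segment of $\mathcal F_{d_2}^{d_1}$ \emph{minimizes} $\bigl|\bigcup_j\{\mathbf c\in\mathcal F:\mathbf c\ge\mathbf a_j\}\bigr|$ over all $M\in\mathcal M_{\prec,r}$. Without it you only have $M_r\le\mathrm{RFP}_r\le$ (stated number), and Theorem \ref{foorprint-relative-gen-weight} gives no help in the other direction; the equality $M_r=\mathrm{RFP}_r=$ (stated number) requires exactly this extremal statement, which is the entire combinatorial content of Datta's paper. Note also two points that make this step delicate and that your sketch glosses over: (1) $\mathcal M_{\prec,r}$ is built from ${\rm in}_\prec(\mathcal L^1\setminus\mathcal L^2)$, which contains monomials of degree $\le d_2$ that are lexicographically large, so the competitors are not confined to $\mathcal F_{d_2}^{d_1}$ and one must rule these out as well; (2) the optimality of the lex segment (and hence the displayed formula) genuinely uses the standing hypothesis $d_1\le\cdots\le d_s$ together with a compatible choice of lexicographic order on the variables --- with the box sizes permuted the same recipe gives the wrong number --- so the compression argument cannot be order-agnostic. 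As it stands, appealing to ``a shifting/compression argument of the type used by Datta'' amounts to citing the result you are trying to prove, so the hardest half of the theorem remains unproved.
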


The square free evaluation codes is another family of evaluation codes where the equality in Theorem \ref{foorprint-relative-gen-weight} holds. The \textit{affine torus} of the affine space $\mathbb{A}^s:=K^s$ is 
given by $T:=(K^*)^s=\lbrace P_{1},\dots,P_{m}\rbrace$, where $K^*$ is the multiplicative group of $K$. Let $V_{\leq d}$ be the set of all squarefree monomials of $S$ of degree at most $d$. The image of the evaluation map,
$$
{\rm ev}:KV_{\leq d}\rightarrow K^{m},\quad f\mapsto (f(P_{1}),\dots,f(P_{m})),
$$
denoted by $\mathcal{C}_{\leq d}$, is called the \textit{squarefree evaluation code} of degree $d.$

Let $\mathcal{C}_{\leq d_{2}} \subsetneq \mathcal{C}_{\leq d_{1}}$ be two squarefree evaluation codes with $1\leq d_{2}<d_{1}\leq s$. By {\rm \cite[Proposition 5.1]{Jaramillo-Vaz-pinto-Villarreal}}, we have that 
$$
\max_{f\in KV_{\leq d_{2}}}|V_{T}(f)|<\max_{f\in KV_{\leq d_{1}}}|V_{T}(f)|.
$$
Then, the code $\mathcal{C}_{\leq d_{2}}$ has no codewords of minimum weight. Thus, we conclude that 
$$
M_{1}(\mathcal{C}_{\leq d_{1}},\mathcal{C}_{\leq d_{2}})=\delta(\mathcal{C}_{\leq d_{1}}),
$$
where $\delta(\mathcal{C}_{\leq d_{1}})$ denotes minimum distance of $\mathcal{C}_{\leq d_{1}}.$ As {\rm \cite[Theorem 5.5]{Jaramillo-Vaz-pinto-Villarreal}} implies that the minimum distance of $\mathcal{C}_{\leq d_{1}}$ is given by the footprint bound, then  
$$
{\rm RFP}_{1}(\mathcal{C}_{\leq d_{1}},\mathcal{C}_{\leq d_{2}})=M_{1}(\mathcal{C}_{\leq d_{1}},\mathcal{C}_{\leq d_{2}}).
$$
In Example \ref{sharp-bound-inequ}, we present two linear standard evaluation codes where the inequality of Theorem \ref{foorprint-relative-gen-weight} is strict.

It is important to note that the degree of $I(X)$ has been used before to describe some properties of evaluation codes. Here is a brief summary to put in context where this work is positioned.
\setlist[enumerate]{leftmargin=0.5 cm}
\begin{enumerate}
\item Minimum distance function. In~\cite{Bernal-Pitones-Villarreal2}, the degree was used to introduced the minimum distance function. This is a function associated to an ideal and depends on an integer $d$. When the ideal is the vanishing ideal $I(X),$ then the minimum distance function at $d$ coincides with the minimum distance of the evaluation code $C_X(d).$ These functions are studied further in \cite{Bernal-Pitones-Villarreal2} and \cite{NB-YP-RV}.
\item Generalized minimum distance function. In~\cite{rth-footprint}, the degree was used to introduced the generalized minimum distance function. In a similar way to~\cite{Bernal-Pitones-Villarreal2}, this functions agrees to the generalized Hamming weight of the evaluation code $C_X(d)$ when the function is associated to the vanishing ideal $I(X)$ and evaluated on the integer $d.$
\item Relative generalized minimum distance function~\cite{GUSR}. This function is associated to an ideal and depends on two integers $d_1$ and $d.$ When $d_1=0,$ this function agrees with the generalized minimum distance function. In addition, when the ideal is the vanishing ideal, this function matches with the relative generalized Hamming weights of the evaluation codes $C_X(d_1) \subseteq C_X(d).$ This function is also defined for points $X$ in a projective space.
\item Generalized Hamming weights of evaluation codes. In~\cite{Jaramillo-Vaz-pinto-Villarreal}, the degree is used to algebraically describe the generalized Hamming weights of evaluation codes.
\item Relative generalized Hamming weights of evaluation codes. In this work, the degree is used to algebraically describe the relative generalized Hamming weights of the evaluation codes $C_X(d_1) \subseteq C_X(d).$
\end{enumerate}
It is important to remark the relations between previous items. Item (3) generalizes (2), which is generalizing (1). Item (5) generalizes (4). Items (2) and (4) are related, they both agree when the evaluation code is a Reed-Muller-type code, but neither generalizes the other. In a similar way, (5) and (3) agrees sometimes, but neither generalizes the other.

\rmv{
\hir{Explicar en el siguiente Remark las diferencias que hablamos el jueves.}

{\cb 
\begin{remark}
There exist an expression  and  a generalization of the minimum distance and generalized Hamming weights of an evaluation code {\rm \cite{rth-footprint,Jaramillo-Vaz-pinto-Villarreal,NB-YP-RV}}.  These descriptions are presented in terms of the degree of the vanishing ideal associated to a set of points.  It is worth mention the difference between them. In the first case, we have a degree formula for the $r$-{\rm th}  generalized Hamming weight of standard  evaluation code {\rm \cite[Theorem 3.4]{Jaramillo-Vaz-pinto-Villarreal}}. This approach allows us to gain insight of the geometry of affine varieties and systems of polynomials equations over finite fields. The generalized Hamming weights for any evaluation code is interpreted using the language of commutative algebra, and this motivate the notion of generalized minimum distance functions (GMD) for any homogeneous ideal in a polynomial ring {\rm \cite[Proposition 4.8]{rth-footprint}}. The generalized minimum distance functions extend the notion of generalized Hamming weights to codes arising from algebraic schemes, rather than just from reduced sets of points {\rm \cite{min-dis-generalized}}. The expressions for the generalized Hamming weights of standard evaluation codes and for the generalized minimum distance functions are very similar, but there are several points where they are different. Notice that the generalized minimum distance functions depends on the degree of the forms {\rm \cite[Proposition 4.8]{rth-footprint}}. It means that for the case of vanishing ideals, it is only computing the number of solutions that a system of homogeneous polynomial of certain degree $d$ has in a given finite set of points over  finite field. The degree formula for generalized Hamming weights of standard evaluation code estimates the number of solutions that a system of polynomial, no necessary homogeneous of specific degree, has in a given finite set of points over  finite field. Furthermore, there are cases where these two approaches for the study of  generalized Hamming weights give the same value. For instance, an affine Cartesian code $C_{\mathcal{A}}(d)$ on a Cartesian product $\mathcal{A}$ is equal to the projective evaluation code $C_{Y}(d)$ of degree $d$ on the projective closure  $Y$ of $\mathcal{A}$ {\rm \cite[Proposition 2.9]{cartesian-codes}}. We point out that the degree formula for the generalized Hamming weights of $C_{\mathcal{A}}(d)$ is equal to the generalized minimum distance function of $I(Y)$ {\rm \cite[Proposition 4.8]{rth-footprint}}\hir{Por favor revisar esta ultima observacion sobre codigos affines cartesianos. segun mis cuentas si es cierto. De igual forma me gustaria que le echaran un ojo}. 

The relative generalized minimum distance function has been defined as an algebraic tool to study the relative generalized Hamming weights of the projective Reed-Muller-type codes. This approach is similar to the case of generalized minimum distance functions, and it is also focus in homogeneous polynomials of certain degree. We study the relative generalized Hamming weights of any standard evaluation codes from an algebraic point of view similar to the generalized Hamming weights of these codes {\rm \cite{Jaramillo-Vaz-pinto-Villarreal}}. We remark that the differences between  our treatment of the relative generalized Hamming weights and the relative generalized minimum distance function are close to the differences between generalized Hamming weights of standard evaluation codes and the generalized minimum distance functions that we explained above.

\end{remark}



}

}

\section{Toric codes over hypersimplices}\label{sec-weight-distribution}
In this section, we introduces toric codes over hypersimplices and compute the next-to minimal weights for certain cases. 

Take an integer $1\leq d\leq s.$ Let $\mathcal{P}$ be the convex hull in $\mathbb{R}^s$
of all integral points $\mathbf{e}_{i_1}+\cdots+\mathbf{e}_{i_d}$
such that $1\leq i_1<\cdots< i_d\leq 
s$, where $\mathbf{e}_i$ is the $i$-th unit vector in $\mathbb{R}^s$. The lattice polytope
$\mathcal{P}$ is called the $d$-th \textit{hypersimplex} of
$\mathbb{R}^s$ \cite[p.~84]{Stur1}.  
 The \textit{toric code} of
$\mathcal{P}$ of degree $d$,  
denoted $\mathcal{C}_\mathcal{P}(d)$ or simply
$\mathcal{C}_d$, is the image 
of the evaluation map 
\begin{equation}\label{sep13-21}
{\rm ev}_d\colon KV_d\rightarrow K^{m},\quad
f\mapsto\left(f(P_1),\ldots,f(P_m)\right),
\end{equation}
where $KV_d$ is the $K$-linear subspace of $S_d$ spanned by
the set $V_d$ of all $t^a$ such that $a\in\mathcal{P}\cap\mathbb{Z}^s$, and
$\{P_1,\ldots,P_m\}$ is the set of all points 
of the affine torus $T=(K^*)^s$. Observe that a monomial $t^a$ of $S$ is in
$KV_d$ if and only if $t^a$ is squarefree and has degree $d$. The set
$V_d$ is precisely the set of squarefree monomials of $S$ of degree $d$.
The following result gives the minimum distance of $\mathcal{C}_{\mathcal{P}}(d)$.

\begin{theorem}\cite[Theorem 4.5]{Jaramillo-Vaz-pinto-Villarreal}\label{minimum-distance-hypersimplex} 
Let $\mathcal{C}_\mathcal{P}(d)$ be the toric code of $\mathcal{P}$
of degree $d$ and let $\delta(\mathcal{C}_\mathcal{P}(d))$
be its minimum distance. Then 
$$
\delta(\mathcal{C}_\mathcal{P}(d))=\begin{cases}
(q-2)^d(q-1)^{s-d}&\mbox{ if }d\leq s/2,\, q\geq 3,\\
(q-2)^{s-d}(q-1)^{d}&\mbox{ if }s/2 < d < s, \,q\geq 3,\\
(q-1)^{s} &\mbox{ if } d=s,
\\
1 &\mbox{ if } q=2.
\end{cases}
$$
\end{theorem}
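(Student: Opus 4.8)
The plan is to use the identity $\delta(\mathcal{C}_\mathcal{P}(d))=m-\nu_d$, where $\nu_d:=\max\{|V_T(f)|\ :\ 0\neq f\in KV_d\}$ is the largest number of zeros on $T$ of a nonzero squarefree homogeneous polynomial of degree $d$; this holds because for $q\geq 3$ every nonzero $f\in KV_d$ is a standard polynomial of $S/I(T)$ (so ${\rm ev}_d(f)\neq 0$) and the weight of ${\rm ev}_d(f)$ equals $m-|V_T(f)|$. First I would settle the degenerate cases. If $q=2$ then $T=\{(1,\dots,1)\}$, so $m=1$ and $\delta=1$. If $d=s$, then $KV_s=K\,t_1\cdots t_s$, which vanishes nowhere on $T$, so $\nu_s=0$ and $\delta=(q-1)^s$. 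For the remaining cases I would record that $I(T)=(t_1^{q-1}-1,\dots,t_s^{q-1}-1)$ (the listed generators vanish on $T$ and already cut out a ring of $K$-dimension $(q-1)^s=|T|$), that they form a Gr\"obner basis with respect to any monomial order $\prec$ (their leading terms $t_1^{q-1},\dots,t_s^{q-1}$ are pairwise coprime, so all $S$-polynomials reduce to $0$), and hence ${\rm in}_\prec(I(T))=(t_1^{q-1},\dots,t_s^{q-1})$.

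\emph{The range $1\leq d\leq s/2$ with $q\geq 3$.} For the lower bound on $\delta$, let $0\neq f\in KV_d$; then ${\rm in}_\prec(f)=t^a$ is a squarefree monomial of degree $d$, so each $a_i\in\{0,1\}$ and $t^a\notin {\rm in}_\prec(I(T))$ since $a_i\le 1<q-1$. Applying Lemma \ref{apr5-19} with $L=(t_1^{q-1},\dots,t_s^{q-1})$ gives $\deg(S/({\rm in}_\prec(I(T)),t^a))=(q-1)^s-(q-2)^d(q-1)^{s-d}$, and Theorem \ref{degree-initial-theorem} then yields $|V_T(f)|\leq (q-1)^s-(q-2)^d(q-1)^{s-d}$; hence $\delta\geq (q-2)^d(q-1)^{s-d}$. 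For the matching upper bound I would exhibit the polynomial $f=\prod_{j=1}^{d}(t_{2j-1}-t_{2j})$, which makes sense since $2d\leq s$; expanding it shows $f\in KV_d$, and $P\in V_T(f)$ iff $p_{2j-1}=p_{2j}$ for some $j$. A coordinatewise count of the complement gives $|T\setminus V_T(f)|=\big((q-1)(q-2)\big)^{d}(q-1)^{s-2d}=(q-2)^d(q-1)^{s-d}$, so $\nu_d=(q-1)^s-(q-2)^d(q-1)^{s-d}$ and $\delta=(q-2)^d(q-1)^{s-d}$.

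\emph{The range $s/2<d<s$ with $q\geq 3$.} Here the footprint bound of Theorem \ref{degree-initial-theorem} is no longer sharp, and the idea is an inversion symmetry of the torus: the map $P=(p_1,\dots,p_s)\mapsto P^{-1}:=(p_1^{-1},\dots,p_s^{-1})$ is a bijection of $T$. Writing $f=\sum_{A}c_A\prod_{i\in A}t_i\in KV_d$ with $A$ ranging over the $d$-subsets of $\{1,\dots,s\}$, set $g:=(t_1\cdots t_s)\,f(t_1^{-1},\dots,t_s^{-1})=\sum_{A}c_A\prod_{i\notin A}t_i$. Then $g\in KV_{s-d}$, the assignment $f\mapsto g$ is a linear bijection $KV_d\to KV_{s-d}$ (it sends $A$ to its complement), and $g(P)=(p_1\cdots p_s)\,f(P^{-1})$ for all $P\in T$, so $V_T(g)=\{P^{-1}:P\in V_T(f)\}$ has the same cardinality as $V_T(f)$. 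Therefore $\nu_d=\nu_{s-d}$, i.e. $\delta(\mathcal{C}_\mathcal{P}(d))=\delta(\mathcal{C}_\mathcal{P}(s-d))$; since $1\leq s-d<s/2$, the previous range gives $\delta(\mathcal{C}_\mathcal{P}(d))=(q-2)^{s-d}(q-1)^{s-(s-d)}=(q-2)^{s-d}(q-1)^{d}$, which is the asserted value.

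The main obstacle is precisely the range $d>s/2$: the footprint/degree machinery of Theorem \ref{degree-initial-theorem} only delivers $\delta\geq (q-2)^d(q-1)^{s-d}$, strictly below the true value, so one genuinely needs the inversion bijection above to transfer the computation to the range $s-d<s/2$. The remaining ingredients are routine verifications: that the $t_i^{q-1}-1$ generate $I(T)$ and form a Gr\"obner basis; that $\prod_{j=1}^{d}(t_{2j-1}-t_{2j})$ expands as a sum of distinct squarefree monomials of degree $d$; and the coordinatewise inclusion--exclusion counts of $T\setminus V_T(f)$ in both ranges.
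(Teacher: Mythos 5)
Your proof is correct. Note first that this paper does not actually contain a proof of Theorem \ref{minimum-distance-hypersimplex}: it is quoted from \cite{Jaramillo-Vaz-pinto-Villarreal}, and the only ingredient proved here is the zero-count bound of Proposition \ref{squarefree-affine}. Your treatment of the range $d\le s/2$ coincides with that ingredient: the identity $\delta(\mathcal{C}_\mathcal{P}(d))=m-\nu_d$ is legitimate for $q\ge 3$ because squarefree polynomials are standard for $S/I(T)$ (so ${\rm ev}_d$ is injective on $KV_d$), the inequality $|V_T(f)|\le (q-1)^s-(q-2)^d(q-1)^{s-d}$ is exactly Proposition \ref{squarefree-affine}, obtained as you do from Theorem \ref{degree-initial-theorem} and Lemma \ref{apr5-19} (the case $V_T(f)=\emptyset$ is trivially covered), and $\prod_{j=1}^{d}(t_{2j-1}-t_{2j})$ attains the bound. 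The degenerate cases are handled correctly; for $q=2$ you should add the one-line remark that the code is nonzero (e.g.\ ${\rm ev}_d(t_1\cdots t_d)\neq 0$), so that the minimum distance is defined and equals $1$. The genuinely nontrivial point, which you identify accurately, is the range $s/2<d<s$, where the footprint bound is no longer sharp; your inversion map $f\mapsto (t_1\cdots t_s)\,f(t_1^{-1},\dots,t_s^{-1})$ is a linear bijection $KV_d\to KV_{s-d}$ that sends each squarefree monomial to its complement and preserves the number of torus zeros, giving $\delta(\mathcal{C}_\mathcal{P}(d))=\delta(\mathcal{C}_\mathcal{P}(s-d))$ and reducing to the first range. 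This is a valid and quite transparent route to the half of the statement that the present paper leaves entirely to the citation, and it is fully consistent with the degree/footprint machinery the paper does develop.
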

 

To prove the next proposition we use the results of
Section~\ref{Prelim-section}.   
\begin{proposition}\cite[Proposition 4.3]{Jaramillo-Vaz-pinto-Villarreal}\label{squarefree-affine} Let $f$ be a  squarefree homogeneous of degree $d$  and let $T$ be the affine torus $(K^{*})^s$ of $\mathbb{A}^s$. If $q\geq
3$ and $1\leq d< s$, then 
$$
|V_T(f)|\leq(q-1)^{s}-(q-2)^{d}(q-1)^{s-d}.
$$
\end{proposition}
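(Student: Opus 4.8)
The plan is to bound $|V_T(f)|$ by comparing it with the vanishing locus of a product of linear forms $t_i - a_i$, a structure amenable to the degree formulas of Section~\ref{Prelim-section}. First I would note that, after rescaling coordinates by an element of $T$ (which is an automorphism of $T$ preserving squarefreeness and degree), we may assume $f$ involves the variable $t_s$ with a nonzero coefficient; write $f = t_s g + h$ where $g$ is squarefree homogeneous of degree $d-1$ in $t_1,\dots,t_{s-1}$ and $h$ is squarefree homogeneous of degree $d$ in $t_1,\dots,t_{s-1}$, with $g \neq 0$. The idea is then to count zeros of $f$ on $T$ fiber by fiber over the projection $T \to (K^*)^{s-1}$: for a point $P' = (p_1,\dots,p_{s-1}) \in (K^*)^{s-1}$, the equation $f(P',t_s)=0$ in the variable $t_s$ has at most one solution (it is linear in $t_s$), and that solution lies in $K^*$ only when $h(P') \neq 0$ and $g(P') \neq -h(P')/(\text{something})$, etc. So $|V_T(f)| \le |\{P' \in (K^*)^{s-1} : g(P') = 0\}| \cdot (q-1) + |\{P' : g(P') \neq 0\}| \cdot 1$, i.e. $|V_T(f)| \le (q-1)\,|V_{T'}(g)| + \big((q-1)^{s-1} - |V_{T'}(g)|\big)$, where $T' = (K^*)^{s-1}$.

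This gives a recursion: setting $N(s,d) := \max\{|V_T(f)| : f \text{ squarefree homogeneous of degree } d \text{ on } (K^*)^s\}$, the above yields $N(s,d) \le (q-2)\,N(s-1,d-1) + (q-1)^{s-1}$ (using $g \neq 0$, so $g$ is a genuine squarefree homogeneous form of degree $d-1$ on $(K^*)^{s-1}$ when $d \ge 2$; the case $d=1$, where $g$ is a nonzero constant, is a base case and must be handled directly — there $f = \sum_{i \in A} a_i t_i$ and one checks $|V_T(f)| \le (q-1)^{s-1} - (q-2)(q-1)^{s-2}$ by the same fiberwise argument, or by invoking Theorem~\ref{minimum-distance-hypersimplex} with $d=1$ since the minimum distance is $(q-1)^s - N(s,1)$). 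I would then verify by induction on $d$ (with $s > d$ fixed along the way, or better, induction on $s$ and $d$ simultaneously) that the claimed bound $(q-1)^s - (q-2)^d(q-1)^{s-d}$ satisfies this recursion: indeed
\begin{align*}
(q-2)\big[(q-1)^{s-1} - (q-2)^{d-1}(q-1)^{s-d}\big] + (q-1)^{s-1}
&= (q-1)^{s-1}(q-2) + (q-1)^{s-1} - (q-2)^{d}(q-1)^{s-d}\\
&= (q-1)^{s} - (q-2)^{d}(q-1)^{s-d},
\end{align*}
which closes the induction cleanly. An alternative to running the recursion by hand is to set $L = (t_1 - a_1, \dots, t_{s-1} - a_{s-1}, t_s)$-type ideals and read off the degrees via Lemma~\ref{apr5-19} and Theorem~\ref{degree-initial-theorem}, but the fiberwise counting is more transparent and self-contained.

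The main obstacle I anticipate is ensuring that the reduction "$f$ genuinely depends on $t_s$ and $g \neq 0$" can always be arranged so that the recursion stays within the class of \emph{squarefree homogeneous forms of positive degree strictly less than the number of variables} — that is, that the hypotheses $q \ge 3$ and $1 \le d < s$ are preserved at each step. When we pass from $(s,d)$ to $(s-1,d-1)$ we need $d - 1 < s - 1$, which is exactly $d < s$, so this is fine; and we need $d - 1 \ge 1$ unless we are in the base case. The delicate point is the fiber count itself: one must check that when $g(P') = 0$ but $h(P') \neq 0$, the equation $f(P',t_s) = h(P') = 0$ has \emph{no} solution, so such $P'$ contributes $0$, not $q-1$, to $|V_T(f)|$ — meaning the bound $(q-1)|V_{T'}(g)| + (\cdots)$ is already generous and the recursion is valid; conversely when $g(P') \neq 0$ the unique root $t_s = -h(P')/g(P')$ may or may not lie in $K^*$, contributing at most $1$. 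Making these case distinctions precise, and separately nailing down the $d = 1$ base case, is where the real work lies; everything else is the bookkeeping of the induction displayed above.
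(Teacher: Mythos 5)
Your proposal is correct in structure but takes a genuinely different route from the paper. The paper's proof is two lines of machinery: by Theorem \ref{degree-initial-theorem}, $|V_T(f)|=\deg(S/(I,f))\le \deg\bigl(S/(\mathrm{in}_\prec(I),\mathrm{in}_\prec(f))\bigr)$ with $I=I(T)$ and $\mathrm{in}_\prec(I)=(t_1^{q-1},\dots,t_s^{q-1})$; since $\mathrm{in}_\prec(f)$ is a squarefree monomial of degree $d$ and $q\ge 3$, it lies outside that initial ideal, and Lemma \ref{apr5-19} evaluates the bound as exactly $(q-1)^s-(q-2)^d(q-1)^{s-d}$. Your fiberwise induction replaces this footprint/degree argument by elementary counting: writing $f=t_sg+h$ with $g\ne 0$ squarefree homogeneous of degree $d-1$ and $h$ squarefree homogeneous of degree $d$ (or zero), each fiber over $\{g\ne 0\}$ contributes at most one zero and each fiber over $\{g=0\}$ at most $q-1$, giving $N(s,d)\le(q-2)N(s-1,d-1)+(q-1)^{s-1}$, and your displayed algebra does close the induction. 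What you lose is brevity and the reusable degree formalism; what you gain is a self-contained argument needing no Gr\"obner basis of $I(T)$.

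Three corrections are needed. First, rescaling coordinates by an element of $T$ cannot make $f$ involve $t_s$; the reduction you want is a permutation of the coordinates (also an automorphism of $T$ preserving squarefreeness and degree). Second, your stated base case is false as written: for $d=1$ the bound to be proved is $(q-1)^s-(q-2)(q-1)^{s-1}=(q-1)^{s-1}$, not $(q-1)^{s-1}-(q-2)(q-1)^{s-2}$; the latter equals $(q-1)^{s-2}$ and fails already for $f=t_1+t_2$, which has exactly $(q-1)^{s-1}$ zeros on $T$. Fortunately the bound actually required by your induction, $|V_T(f)|\le(q-1)^{s-1}$, is immediate from your own fiber count (at most one root per fiber once $f$ involves $t_s$ with nonzero coefficient), so this is a slip rather than a structural gap. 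Third, do not invoke Theorem \ref{minimum-distance-hypersimplex} to settle the base case: in the cited source that theorem is derived from the present proposition, so this fallback is circular; stick to the direct count.
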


\begin{proof}
Let $\prec$ be a monomial order on $S$ and let $I=I(T)$
be the vanishing ideal of $T$. 
By Theorem~\ref{degree-initial-theorem}, one has
\begin{equation}\label{apr17-1}
|V_T(f)|=\deg(S/(I,f)).
\end{equation}
\quad The initial ideal $L:={\rm
in}_\prec(I)$ of $I$ is
generated by the set $\{t_i^{q-1}\}_{i=1}^{s}$. Let
$t^a={\rm in}_\prec(f)=t_1^{a_1}\cdots 
t_s^{a_s}$ be the initial monomial of $f$.  Since $f$ is monomially squarefree, so is
$t^a$. As $q\geq 3$, $t^a$ cannot be in $L$. Therefore, by
Theorem~\ref{degree-initial-theorem} and 
Lemma~\ref{apr5-19}, we get  
\begin{equation}\label{apr17-2}
\deg(S/(I,f))\leq(q-1)^{s}-(q-2)^{d}(q-1)^{s-d},
\end{equation}
where $d=\deg(f)$. Thus, the inequality follows at once from
Eq.~(\ref{apr17-1}).
\end{proof}

\begin{corollary}\label{bounds-homo-square}
Let $f_{d}$ be a reducible squarefree homogeneous polynomial  of degree $d$,
$$
f_{d}=h_{r}t_{i_{r+1}}\cdots t_{i_{r+d}}
$$
with $r<d$ and ${\rm deg}(h_{r})=r$. If $q\geq
3$ and $1<d< s$, then 
$$
|V_T(f_{d})|\leq(q-1)^{s}-(q-2)^{r}(q-1)^{s-r}.
$$
\end{corollary}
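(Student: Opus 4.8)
The plan is to strip the monomial factor off $f_d$ and reduce the estimate to Proposition~\ref{squarefree-affine}, which already bounds $|V_T(\cdot)|$ for a squarefree homogeneous form. Write $m := t_{i_{r+1}}\cdots t_{i_d}$, so that $f_d = h_r\, m$ with $\deg(h_r)=r$ (and $f_d\neq 0$, hence $h_r\neq 0$).

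First I would observe that $m$ is a product of distinct variables, so $m(P)=p_{i_{r+1}}\cdots p_{i_d}\neq 0$ for every point $P=(p_1,\dots,p_s)$ of the affine torus $T=(K^*)^s$. Consequently, for $P\in T$ we have $f_d(P)=h_r(P)\,m(P)=0$ if and only if $h_r(P)=0$, which gives the key reduction
$$
V_T(f_d)=V_T(h_r),
$$
so it suffices to bound $|V_T(h_r)|$.

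Next I would check that $h_r$ is a nonzero squarefree homogeneous polynomial of degree $r$. Since $f_d$ is squarefree, every monomial occurring in $f_d=h_r\,m$ is squarefree; because $m$ is a fixed squarefree monomial, this forces every monomial of $h_r$ to be squarefree and to involve none of the variables $t_{i_{r+1}},\dots,t_{i_d}$. In particular $h_r$ meets the hypotheses of Proposition~\ref{squarefree-affine}.

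Finally, if $r\geq 1$ then $1\leq r<d<s$, so Proposition~\ref{squarefree-affine} applied to $h_r$ (with the degree $d$ there replaced by $r$) yields
$$
|V_T(h_r)|\leq (q-1)^s-(q-2)^r(q-1)^{s-r},
$$
and combining with $V_T(f_d)=V_T(h_r)$ proves the corollary. The remaining case $r=0$ is immediate: then $h_r$ is a nonzero constant, $V_T(f_d)=\emptyset$, and the right-hand side equals $(q-1)^s-(q-1)^s=0$. I do not anticipate a genuine obstacle here; the only step requiring real care is verifying that the squarefreeness of $f_d$ descends to $h_r$ together with the disjointness of its support from $\{t_{i_{r+1}},\dots,t_{i_d}\}$, since that is exactly what licenses the appeal to Proposition~\ref{squarefree-affine}.
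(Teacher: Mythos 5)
Your proof is correct and follows essentially the same route as the paper: observe that the squarefree monomial factor is nonvanishing on the torus, so $V_T(f_d)=V_T(h_r)$, check that squarefreeness of $f_d$ descends to $h_r$, and apply Proposition~\ref{squarefree-affine} in degree $r$. Your extra care in justifying the descent of squarefreeness and in treating the trivial case $r=0$ (which the paper's appeal to Proposition~\ref{squarefree-affine}, requiring degree at least $1$, silently skips) only strengthens the argument.
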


\begin{proof}
The polynomial $h_{r}$ is a squarefree homogeneous polynomial in the variable $\lbrace t_{1},\dots, t_{s}\rbrace\setminus\lbrace t_{i_{r+1}},\dots,t_{i_{r+d}}\rbrace$, because the polynomial $f_{d}$ is a squarefree homogeneous polynomial. Notice that $V_{T}(f_{d})$ is equal to $V_{T}(h_{r})$, using Proposition \ref{squarefree-affine} we conclude that 
$$
|V_T(f_{d})|=|V_T(g_{r})|\leq(q-1)^{s}-(q-2)^{r}(q-1)^{s-r}.
$$
\end{proof}

\begin{proposition}\label{weights-degree-one}Let $r$ be an integer such that $2\leq r\leq s$. If $\alpha_{j_{1}},\dots,\alpha_{j_{r}}\in K^*$, and $f_{r}=\alpha_{j_{1}}t_{i_{1}}+\cdots+\alpha_{j_{r}}t_{i_{r}}$, then 
$$
|V_{T}(f_{r})|=(q-1)^{s-1}-(q-1)^{s-2}+\cdots+(-1)^{r}(q-1)^{s-(r-1)}.
$$
\end{proposition}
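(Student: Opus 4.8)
The plan is to reduce the count to the number of zeros of a single linear form on a torus and then telescope an elementary recursion in $r$.

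First I would exploit that, since the monomials $t_{i_{1}},\dots,t_{i_{r}}$ are distinct, the polynomial $f_{r}$ depends only on those $r$ variables. Hence a point $P=(p_{1},\dots,p_{s})\in T$ satisfies $f_{r}(P)=0$ if and only if $(p_{i_{1}},\dots,p_{i_{r}})\in(K^{*})^{r}$ is a zero of the linear form $\ell(x_{1},\dots,x_{r}):=\alpha_{j_{1}}x_{1}+\cdots+\alpha_{j_{r}}x_{r}$, while the remaining $s-r$ coordinates of $P$ range freely over $K^{*}$. This yields
$$
|V_{T}(f_{r})|=(q-1)^{s-r}N_{r},\qquad N_{r}:=\bigl|\{(x_{1},\dots,x_{r})\in(K^{*})^{r}:\ell(x_{1},\dots,x_{r})=0\}\bigr|,
$$
so it is enough to show $N_{r}=\sum_{k=1}^{r-1}(-1)^{k-1}(q-1)^{r-k}$.

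Next I would prove this formula by induction on $r$, with formal base value $N_{1}=0$ (the form $\alpha_{j_{1}}x_{1}$ has no zero in $K^{*}$, consistent with the fact that a single variable never vanishes on the torus). For the inductive step, fix $(x_{1},\dots,x_{r-1})\in(K^{*})^{r-1}$: there is a unique $x_{r}\in K$ making $\ell(x_{1},\dots,x_{r})=0$, namely $x_{r}=-\alpha_{j_{r}}^{-1}(\alpha_{j_{1}}x_{1}+\cdots+\alpha_{j_{r-1}}x_{r-1})$, and this $x_{r}$ lies in $K^{*}$ precisely when $\alpha_{j_{1}}x_{1}+\cdots+\alpha_{j_{r-1}}x_{r-1}\neq 0$, i.e.\ exactly for the $(q-1)^{r-1}-N_{r-1}$ tuples on which the truncated form is nonzero. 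This gives the recursion $N_{r}=(q-1)^{r-1}-N_{r-1}$; telescoping it from $N_{1}=0$ produces the alternating sum, and multiplying through by $(q-1)^{s-r}$ recovers $|V_{T}(f_{r})|=(q-1)^{s-1}-(q-1)^{s-2}+\cdots+(-1)^{r}(q-1)^{s-(r-1)}$, as claimed.

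I do not anticipate any genuine obstacle: the whole argument is elementary counting. The only points that need a little care are the bookkeeping and the signs: it is the distinctness of $t_{i_{1}},\dots,t_{i_{r}}$ that legitimizes both the free coordinates in the first step and the ``solve for $x_{r}$'' move in the inductive step, and one must track the exponents so that the final sum has exactly $r-1$ terms ending in $(-1)^{r}(q-1)^{s-(r-1)}$ (here $(-1)^{r-2}=(-1)^{r}$). If one wishes to stay inside the degree framework used elsewhere in the paper, an alternative is to invoke Theorem~\ref{degree-initial-theorem} to write $|V_{T}(f_{r})|=\deg(S/(I(T),f_{r}))$ and evaluate that degree directly; but the elementary count above is shorter and self-contained.
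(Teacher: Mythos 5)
Your proof is correct and is essentially the paper's own argument: solve the equation for one of the variables, observe that the resulting value lies in $K^{*}$ exactly when the truncated form is nonzero, and telescope the recursion $N_{r}=(q-1)^{r-1}-N_{r-1}$ into the alternating sum. The only cosmetic difference is that you first factor out the $(q-1)^{s-r}$ contribution of the variables not appearing in $f_{r}$ and then induct on $r$, whereas the paper keeps the full torus $(K^{*})^{s}$ and phrases the same recursion as an induction on $s$.
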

\begin{proof}
We use induction on $s$. For $s=2$, the number of solution over $T$ is $(q-1)$. This is because $\alpha_{j_{1}}t_{1}+\alpha_{j_{2}}t_{2}=0$ is equivalent to $t_{1}=-\frac{\alpha_{j_{2}}}{\alpha_{j_{1}}}t_{2}$. Assume $K^{*}=\lbrace\alpha_{1},\alpha_{2},\dots,\alpha_{q-1}\rbrace$ and  $T':=(K^{*})^{s-1}$. Define the set
$$
B:=\lbrace (\alpha_{1},\alpha_{2},\dots,\alpha_{i_{1}-1},0,\alpha_{i_{i}+1},\dots,\alpha_{q-1}): a'_{2}\alpha_{i_{2}}+\cdots+a'_{r}\alpha_{i_{r}}=0\rbrace.
$$
The equation 
\begin{equation}\label{sum-m-variables}
\alpha_{j_{1}}t_{i_{1}}+\cdots+\alpha_{j_{r}}t_{i_{r}}=0
\end{equation} 
is equivalent to the equation 
$t_{i_{1}}=\alpha'_{j_{2}}t_{i_{2}}+\cdots+\alpha'_{j_{r}}t_{i_{r}}.$
Thus, 
the number of solution of Eq.~(\ref{sum-m-variables}) is 
$$
(q-1)^{s-1}-|B|.
$$
Given the bijection
\begin{align*}
B&\rightarrow V_{T'}(\alpha'_{j_{2}}t_{i_{2}}+\cdots+\alpha'_{j_{r}}t_{i_{r}})\\
(\alpha_{1},\alpha_{2},\dots,\alpha_{i_{1}-1},0,\alpha_{i_{i}+1}&,\dots,\alpha_{q-1})\mapsto(\alpha_{1},\alpha_{2},\dots,\alpha_{i_{1}-1},\alpha_{i_{i}+1},\dots,\alpha_{q-1}),
\end{align*}
we obtain that
$
|V_{T}(f)|=(q-1)^{s-1}-|V_{T'}(\alpha'_{j_{2}}t_{i_{2}}+\cdots+\alpha'_{j_{r}}t_{i_{r}})|.
$
Therefore, by induction,
\begin{align*}
|V_{T}(f)|=&(q-1)^{s-1}-\left( (q-1)^{(s-1)-1}-(q-1)^{(s-1)-2}+\cdots+(-1)^{r-1}(q-1)^{(s-1)-((r-1)-1)}\right)\\
=&(q-1)^{s-1}-(q-1)^{s-2}+(q-1)^{s-3}+\cdots+(-1)^{r}(q-1)^{s-(r-1)}.
\end{align*}
Which proves the proposition.
\end{proof}

We come to the main result of this section.
\begin{theorem}\label{theo-weight-dist}
Let $\mathcal{C}_{\mathcal{P}}(1)$ be the toric code of $\mathcal{P}$ of degree $1$. For $1\leq t\leq s/2,$ the $t$-th Hamming weight of $\mathcal{C}_{\mathcal{P}}(1)$ is given by
$$
(q-1)^{s}-(q-1)^{s-1}+(q-1)^{s-2}+\cdots+(-1)^{2t+1}(q-1)^{s-(2t-1)}.
$$
\end{theorem}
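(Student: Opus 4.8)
The plan is to reduce the computation to Proposition~\ref{weights-degree-one} and then to sort the resulting finite list of weights by an elementary sign analysis. Throughout I assume $q\ge 3$ (for $q=2$ the torus $T$ is a single point and the statement is degenerate).

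\emph{Step 1: description of the codewords and their weights.} For $d=1$ the polytope $\mathcal P$ has exactly the $s$ standard basis vectors as its lattice points, so $V_1=\{t_1,\dots,t_s\}$ and a codeword of $\mathcal C_{\mathcal P}(1)$ is ${\rm ev}_1(f)$ with $f=\sum_{i=1}^s\alpha_it_i$, $\alpha_i\in K$; its Hamming weight equals $|T|-|V_T(f)|=(q-1)^s-|V_T(f)|$. If $f\neq 0$ has exactly $r$ nonzero coefficients, then for $r=1$ we have $V_T(f)=\emptyset$, while for $2\le r\le s$ Proposition~\ref{weights-degree-one} shows that $|V_T(f)|$ depends only on $r$. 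In either case the weight equals
\[
w_r:=(q-1)^s-\left((q-1)^{s-1}-(q-1)^{s-2}+\cdots+(-1)^r(q-1)^{s-(r-1)}\right)=\sum_{k=0}^{r-1}(-1)^k(q-1)^{s-k}.
\]
Each $w_r$ is attained, e.g. by $f=t_1+\cdots+t_r$ (using $r\le s$), so the set of Hamming weights of $\mathcal C_{\mathcal P}(1)$ is exactly $\{w_1,\dots,w_s\}$, and the theorem becomes the assertion that for $1\le t\le s/2$ the $t$-th smallest element of this set is $w_{2t}$. (For $t=1$ this recovers $\delta(\mathcal C_{\mathcal P}(1))=(q-2)(q-1)^{s-1}$ from Theorem~\ref{minimum-distance-hypersimplex}.)

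\emph{Step 2: ordering the $w_r$.} From the formula one reads off $w_{r+1}-w_r=(-1)^r(q-1)^{s-r}$, hence, using $q\ge 3$,
\[
w_{2j+2}-w_{2j}=(q-2)(q-1)^{s-2j-1}>0,\qquad w_{2j+1}-w_{2j-1}=-(q-2)(q-1)^{s-2j}<0,
\]
so the even-indexed subsequence $w_2<w_4<w_6<\cdots$ is strictly increasing and the odd-indexed subsequence $w_1>w_3>w_5>\cdots$ is strictly decreasing; moreover $w_{2j+1}-w_{2j}=(q-1)^{s-2j}>0$ and $w_{2j}-w_{2j-1}=-(q-1)^{s-2j+1}<0$, so each even-indexed weight is strictly below both of its neighbours. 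The crucial claim is that every even-indexed weight is strictly below every odd-indexed weight, i.e. $w_{2i}<w_{2j+1}$ whenever $2i,\,2j+1\in\{1,\dots,s\}$: if $2i\le 2j$ then $w_{2i}\le w_{2j}<w_{2j+1}$ by monotonicity of the even subsequence, and if $2i\ge 2j+2$ then $w_{2j+1}\ge w_{2i-1}>w_{2i}$ by monotonicity of the odd subsequence.

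\emph{Step 3: conclusion.} Granting the claim, when $\{w_1,\dots,w_s\}$ is listed in increasing order its first $\lfloor s/2\rfloor$ entries are exactly $w_2<w_4<\cdots<w_{2\lfloor s/2\rfloor}$, since no odd-indexed weight can be among them. Hence for $1\le t\le s/2$ (equivalently $2t\le s$) the $t$-th Hamming weight of $\mathcal C_{\mathcal P}(1)$ is
\[
w_{2t}=\sum_{k=0}^{2t-1}(-1)^k(q-1)^{s-k}=(q-1)^s-(q-1)^{s-1}+(q-1)^{s-2}+\cdots+(-1)^{2t+1}(q-1)^{s-(2t-1)},
\]
which is the stated formula. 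The substantive input is Proposition~\ref{weights-degree-one}, which already does the zero-counting; the only delicate point is the sign bookkeeping in Step~2, and I expect the interleaving claim --- that every even-indexed weight lies below every odd-indexed one --- to be the main thing to get right.
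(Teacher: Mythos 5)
Your proposal is correct and follows essentially the same route as the paper: both reduce to Proposition~\ref{weights-degree-one}, observe that the weight of a codeword depends only on the number $r$ of nonzero coefficients, and then compare the even- and odd-indexed values to conclude that the first $\lfloor s/2\rfloor$ distinct weights are the even-indexed ones. Your Step~2 merely spells out (in terms of the weights $w_r$ rather than the zero-counts $|V_T(f_r)|$) the interleaving claim that the paper's proof states more tersely via its three displayed inequalities, and your explicit restriction to $q\ge 3$ is a harmless clarification that the paper leaves implicit.
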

\begin{proof}
From Proposition \ref{weights-degree-one},
$$
|V_{T}(f_{m})|=(q-1)^{s-1}-(q-1)^{s-2}+\cdots+(-1)^{m}(q-1)^{s-(m-1)},
$$
where $f_{m}=a_{1}t_{i_{1}}+\cdots+a_{m}t_{i_{m}}\in KV_{1}$ is a homogeneous squarefree of degree one. 
If $1\leq t\leq s/2$, notice that 
\begin{align*}
|V_{T}(f_{2(t+1)})|&=|V_{T}(f_{2t})|-(q-1)^{s-(2t)}+(q-1)^{s-(2t+1)}<|V_{T}(f_{2t})|,\\
|V_{T}(f_{2t+1})|&=|V_{T}(f_{2t})|-(q-1)^{s-(2t)}<|V_{T}(f_{2t})|, \text{ and }\\
|V_{T}(f_{2t+1})|&=|V_{T}(f_{2t})|-(q-1)^{s-(2t)}\\
&<|V_{T}(f_{2t})|-(q-1)^{s-(2t)}+(q-1)^{s-(2t+1)}-(q-1)^{s-(2t+2)}=|V_{T}(f_{2t+3})|.\\
\end{align*}
This implies that the weight distribution of $\mathcal{C}_{\mathcal{P}}(1)$ is given by the even values of $m.$ In other words, the $t$-th Hamming weight is 
\begin{align*}
|T|-&\left( (q-1)^{s-1}+(q-1)^{s-2}+\cdots+(-1)^{2t}(q-1)^{s-(2t-1)}\right)\\
=&(q-1)^{s}-(q-1)^{s-1}+(q-1)^{s-2}+\cdots+(-1)^{2t+1}(q-1)^{s-(2t-1)}.
\end{align*}
Thus, we obtain the result.
\end{proof}

\section{Examples}\label{example-sec}
In this section we include implementations in {\it Macaulay2} \cite{cod_package,mac2} and examples that illustrate some of the results of the previous sections.

\begin{example}\label{five-points}
Take $K=\mathbb{F}_{3}$ and $X=\lbrace (0,0), (1,0), (0,1), (1,1), (0,-1)\rbrace \subset K^{2}$. The vanishing ideal $I=I(X)$ of $X$ is generated by $t_{1}^{2}-t_{1},t_{2}^{3}-t_{2}, t_{1}t_{2}^{2}-t_{1}t_{2}$. We compute the first and second relative generalized Hamming weights for the Reed-Muller codes $C_{X}(d)$ of degree $d=1,2$. Using Theorem \ref{relative-Hamming-weights} and Procedure  \ref{procedure-five-points} we have
$$
M_{1}(C_{X}(2),C_{X}(1))=1,
$$
$$
M_{2}(C_{X}(2),C_{X}(1))=2.
$$
\end{example} 

\begin{example}\label{second-Hamming-weight-example}
Let $K$ be the field $\mathbb{F}_{3}$, $T$ the affine torus $(\mathbb{F}_{3}^{*})^{4}$, and $I=I(T)$ the vanishing ideal of $T$. Using Theorem \ref{minimum-distance-hypersimplex} and Procedure \ref{second-weight-procedure}, we obtain the following table:
\begin{eqnarray*}
\hspace{-11mm}&&\left.
\begin{array}{c|c|c|c|c}
d & 1 & 2 & 3 & 4\\
   \hline
m & 16 & 16 & 16 & 16\\
   \hline
\dim_K(\mathcal{C}_\mathcal{P}(d)) & 4 & 6&  4& 1
 \\ 
   \hline
 \delta(\mathcal{C}_\mathcal{P}(d))   \    & 8 & 4  & 8&16\\
    \hline
 \delta^{2}(\mathcal{C}_\mathcal{P}(d))   \    & 10 & 6  & 10&16\\
\end{array}
\right.
\end{eqnarray*}
\noindent
where $\delta^{2}(\mathcal{C}_\mathcal{P}(d))$ denotes de second Hamming weight of the toric code  $\mathcal{C}_\mathcal{P}(d)$.
\end{example}

\begin{example}\label{sharp-bound-inequ}
Let $K$ be the field $\mathbb{F}_{5}$ and $S=K[t_{1},t_{2}]$ the polynomial ring in two variables. Let $\mathcal{L}^{1}$ be the linear space generated by all monomials $t_{1}^{a_{1}}t_{2}^{a_{2}}$ such that $(a_{1},a_{2})$ is one of the solid points or plus symbol points of the configuration depicted in Figure $1$. Let $\mathcal{L}^{2}$ be the linear subspace of $\mathcal{L}^{1}$ generated only by all monomials $t_{1}^{a_{1}}t_{2}^{a_{2}}$ such that $(a_{1},a_{2})$ is one of plus symbol points. The vanishing ideal of the affine torus $T=(\mathbb{F}_{5}^{*})^{2}$ is generated by the Gr\"obner basis $G=\lbrace t_{1}^{4}-1,t_{2}^{4}-1\rbrace$. Then, the standard evaluation codes $\mathcal{L}_{T}^{1}$ and $\mathcal{L}_{T}^{2}$ are generated by 
$$
B^{1}=\lbrace 1,t_{3},t_{1}t_{2}^{2},t_{2}^{3},t_{1}t_{2},t_{1}^{2}\rbrace \quad \hbox{ and } \quad B^{2}=\lbrace t_{1}t_{2}^{2},t_{1}t_{2}\rbrace,
$$
respectively. Using Theorem \ref{relative-Hamming-weights}, Theorem \ref{foorprint-relative-gen-weight}, and Procedure \ref{sharp-inequ}, we obtain that the relative minimum distance $M_{1}(\mathcal{L}_{T}^{1},\mathcal{L}_{T}^{2})$ of $\mathcal{L}_{T}^{1}$ with respect to $\mathcal{L}_{T}^{2}$ is $8$ and the relative footprint bound $RFP_{1}(\mathcal{L}_{T}^{1},\mathcal{L}_{T}^{2})$ is $4$.
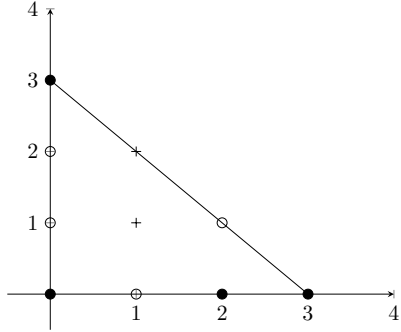
\begin{figure}[H]
\begin{tikzpicture}[scale = 0.75]
\begin{axis}[axis lines=center, enlargelimits=false, 
 xmin=-0.5,
  xmax=4,
  ymin=-0.5,
  ymax=4,
  title={}]
  \addplot [only marks, mark size=2.5] table {
  0 0
  0 3
  2  0
  3  0
};
 \addplot [only marks, mark=+, mark size=2.5] table {
  1  1
  1  2
};
\addplot [only marks, mark=o, mark size=2.5] table {
 2  1
 1  0
 0  1
 0  2
};
\addplot [domain=0:1, samples=2]{-1*x+3};
\addplot [domain=1:3, samples=2]{-1*x+3};
\end{axis}

\end{tikzpicture}
\caption{Lattice points defining $\mathcal{L}^{1}$ and
 $\mathcal{L}^{2}$ of Example~\ref{sharp-bound-inequ}.}\label{transforming-figure}
\end{figure}

\end{example}


\begin{procedure}\label{procedure-five-points}
Computing the relative generalized Hamming weights of a Reed-Muller-type code $C_X (d)$ using Theorem \ref{relative-Hamming-weights}. This procedure corresponds to Example \ref{five-points}. 

\begin{verbatim}
q=3, K=ZZ/q, S=K[t1,t2];
I1=ideal(t1,t2)
I2=ideal(t1-1,t2) 
I3=ideal(t1,t2-1)
I4=ideal(t1-1,t2-1)
I5=ideal(t1,t2+1) 
I=intersect(I1,I2,I3,I4,I5)
M=coker gens gb I
---
relgenfun = method(TypicalValue => ZZ);
 relgenfun (ZZ,ZZ,ZZ,Ideal) := (d,t,r,I) ->(
H=flatten entries basis(0,d,M);     
J=apply(toList (set(0..char ring I-1))^**(#H)-(set{0})^**(#H),toList);
Y=apply(J,x->matrix {H}*vector deepSplice x);
F=apply(Y,z->flatten entries z);
W=toList(apply(F,x-> if (degree x#0)#0 > t then x#0 else 0)-set{0});
V=apply(W,m->(leadCoefficient(m))^(-1)*m);
C=subsets(V,r);
B=apply(C,x->ideal(x)); 
A=apply(B,x -> if #(set flatten entries leadTerm gens x)==r then 
degree(I+x) else 0);
degree M-max A 
)    
---
relgenfun(2,1,1,I) 
relgenfun(2,1,2,I) 
\end{verbatim}
\end{procedure}

\begin{procedure}\label{second-weight-procedure} 
Computing the second Hamming weights of a toric code over a hypersimplex 
$\mathcal{C}_\mathcal{P}(d)$ on $T$. 
The input for this
procedure is a generating set of  $KV_{d}$ and the vanishing 
ideal $I$ of $T$. This procedure 
corresponds to Example~\ref{second-Hamming-weight-example}.  
\begin{verbatim}
q=3
K=ZZ/q;
R=K[t1,t2,t3,t4];
--This is the vanishing ideal of the affine torus T:
I=ideal(t1^(q-1)-1,t2^(q-1)-1,t3^(q-1)-1,t4^(q-1)-1)
-------
L=flatten entries basis(4,R)
--This is a K-basis for the linear space L:
M=toList (set apply(0..#L-1,x-> if (rsort flatten exponents L#x)#0 <= 1 
then L#x else 1)
-set{1})
-----
D=apply(toList (set(0..char ring I-1))^**(#M)-(set{0})^**(#M),toList);
K=apply(D,x->matrix {M}*vector deepSplice x)
--This is the linear space KV_{d}
T=apply(K,z->ideal(flatten entries z))
--number of zeros of each element in KV_{d}
H=apply(T,x -> degree(I+x))
--maximum number of zeros of a element in KV_{d}
max H
----number of zeros of a polynomial in KV_{d} with less zeros than max H
G=apply(T,x->if degree(I+x)<max H then degree(I+x) else 0)
F=set apply(T,x-> if  degree(I+x)==max G then x else 0)-set {0}
---
max G
--- The second Hamming weight
(q-1)^4-max G
\end{verbatim}
\end{procedure}

\begin{procedure}\label{sharp-inequ} 
Computing the relative generalized Hamming weights and relative footprint of an evaluation code $\mathcal{L}^{1}_{X}$ with respect to a subcode  $\mathcal{L}^{2}_{X}$ using Theorem \ref{relative-Hamming-weights}, and Theorem \ref{foorprint-relative-gen-weight}. The input for this procedure is two generating sets for the spaces $\mathcal{L}^{1}$, $\mathcal{L}^{2}$ and the vanishing ideal of $X$. This procedure 
corresponds to Example~\ref{sharp-bound-inequ}.  
\begin{verbatim}
q=5, K=GF(q,Variable=>a), S=K[t1,t2]
--This is the vanishing ideal of the affine torus T:
I=ideal(t1^(q-1)-1,t2^(q-1)-1)
--This is the quotient ring S/I:
M=coker gens gb I
r=1--we are computing the r-th relative generalized Hamming weight 
--Grobner basis
G=gb I
--This is a K-basis for the linear space L:
Basis={1,t1^3,t1*t2^2,t2^3,t1*t2,t1^2}
subBasis={t1*t2^2,t1*t2}
F=apply(toList((set(0,a,a^2,a^3,a^4))^**(#Basis)-(set{0})^**(#Basis))/deepSplice
    ,toList);
E=apply(F,x->matrix{Basis}*vector x);
R=apply(E,z-> flatten entries z);
L=apply(toList((set(0,a,a^2,a^3,a^4))^**(#subBasis)-(set{0})^**(#subBasis))
    /deepSplice,toList);
EE=apply(apply(L,x->matrix{subBasis}*vector x),z->flatten entries z);
MM=apply(R,x->if member(x,EE) then 0 else x)-set {0};
U=apply(subsets(apply(apply(MM,n->n#0),m->(leadCoefficient(m))^(-1)*m),r),ideal)
S=degree M- max apply(U,x->if #(set flatten entries leadTerm gens x)==r 
    then degree(I+x) else 0)
--This is the initial ideal of I:
init=ideal(leadTerm gens gb I)
er=(x)-> degree ideal(init,x)
--This gives the r-th footprint:
UU=toList apply(U,x->leadTerm gens x);
fpr=degree M - max apply(apply(apply(subsets(UU,r),toSequence),ideal),er) 
\end{verbatim}
\end{procedure}


\bibliographystyle{plain}

\begin{thebibliography}{10}


\bibitem{cod_package}
T.~Ball, E.~Camps, H.~Chimal-Dzul, D.~Jaramillo-Velez, H.~H. L{\'o}pez,
  N.~Nichols, M.~Perkins, I.~Soprunov, G.~Vera-Mart{\'i}nez, and G.~Whieldon,
\newblock Coding theory package for {M}acaulay2,
\newblock Journal of Software for Algebra and Geometry, to appear.
\newblock \url{https://arxiv.org/pdf/2007.06795.pdf}.

\bibitem{Bains} T. Bains,\textit{Generalized Hamming weights and their applications to secret
sharing schemes}, M.S. thesis, Dept. Korteweg de Vries Inst. Math.,
Univ. Amsterdam, Amsterdam, The Netherlands, 2008.

\bibitem{Becker-Weispfenning} T. Becker and V. Weispfenning, \textit{Gr\"obner bases A
Computational Approach to Commutative Algebra}, in cooperation with
Heinz Kredel,  Graduate Texts in Mathematics {\bf 141},
Springer-Verlag, New York, 1993. 

\bibitem{Bruen} A. Bruen, Blocking sets and low-weight codewords in the generalized Reed-Muller codes, Contemp. Math. {\bf 525} (2010) 161-164.

\bibitem{Carvalho} C. Carvalho and V. Neumann, On the second Hamming weight of some Reed-Muller type codes,  Finite Fields Appl.
{\bf 24} (2013) 88-94.

\bibitem{Car-Neu-pro-RM2} C. Carvalho and V. Neumann, Towards the complete determination of next-to-minimal weights of projective Reed-Muller codes,  Des. Codes Cryptogr. {\bf 89} (2021), no. 2, 301-315.

\bibitem{Car-Neu-af2} C. Carvalho and V. Neumann, Completing the determination of the next-to-minimal weights of affine Cartesian codes,  Finite Fields Appl. {\bf 69} (2021), Paper no. 101775, 13 pp.

\bibitem{Car-Neu-pro-RM1} C. Carvalho and V. Neumann, On the next-to-minimal weight of projective Reed-Muller codes,   Finite Fields Appl. {\bf 50} (2018) 382-390.

\bibitem{Car-Neu-af1} C. Carvalho and V. Neumann, On the next-to-minimal weight of affine Cartesian codes,  Finite Fields Appl. {\bf 44} (2017), 113-134. 

\bibitem{Car-Neu-bi} C. Carvalho and V. Neumann, The next-to-minimal weights of binary projective Reed-Muller codes,   IEEE Trans. Inform. Theory {\bf 62} (2016), no. 11, 6300-6303.


\bibitem{Cherdieu-Rolland} J. P. Cherdieu, R. Rolland, On the number of points of some hypersurfaces in $\mathbb{F}^{n}_{q}$, Finite Fields
Appl. {\bf 2} (1996) 214-224.



\bibitem{CLO} D. Cox, J. Little, and D. O'Shea, {\it Ideals, 
Varieties, and Algorithms\/}, Springer-Verlag, 1992.

\bibitem{duursma-renteria-tapia} I. M. Duursma, C. Renter\'\i a, and
H. Tapia-Recillas, Reed-Muller codes on complete intersections, Appl. Algebra Engrg.
Comm. Comput.  {\bf 11}  (2001),  no. 6, 455--462.

\bibitem{datta} M. Datta, Relative generalized Hamming weights of affine Cartesian codes, Des. Codes Cryptogr. {\bf 88} (2020), no. 6, 1273-1284.

\bibitem{geil} O. Geil, On the second weight of generalized Reed-Muller codes, Des. Codes Cryptogr. {\bf 48} (3) (2008)
323-330 

\bibitem{geil-stefano} O. Geil and M. Stefano, Relative generalized Hamming weights of $q$-ary Reed-Muller codes, Adv. Math. Commun. {\bf 11} (2017), no. 3, 503-531.

\bibitem{geil-2008} O. Geil, Evaluation codes from an affine variety
code perspective, Advances in algebraic geometry codes, 153--180, 
Ser. Coding Theory Cryptol., {\bf 5}, World Sci. Publ., Hackensack, NJ,
2008.

\bibitem{geil-hoholdt} 
O. Geil and T. H{\o}holdt, Footprints or generalized Bezout's theorem, 
IEEE Trans. Inform. Theory {\bf 46}  (2000),  no. 2, 635--641.

\bibitem{GMMRL} 
O. Geil, S. Martin ,  R. Matsumoto,  D. Ruano, and  Y. Luo, Relative generalized Hamming weights of one-point algebraic geometric codes, IEEE Trans. Inform. Theory {\bf 60} (10)  (2014),  5938-5949.
  
\bibitem{geil-pellikaan} O. Geil and R. Pellikaan, On the structure of order domains, Finite Fields Appl. {\bf 8} (2002), no. 3, 369--396. 



\bibitem{rth-footprint}
 M. Gonz\'alez-Sarabia, J. Mart\'\i nez-Bernal, R. H. Villarreal, and
 C. E. Vivares, Generalized minimum distance functions, 
J. Algebraic Combin, {\bf 50} (2019), no. 3, 317--346. 

\bibitem{GRT} M. Gonz\'alez-Sarabia, C. Renter\'\i a, and H.
Tapia-Recillas, Reed-Muller-type codes over the Segre variety,  
Finite Fields Appl. {\bf 8}  (2002),  no. 4, 511--518. 

\bibitem{GUSR} M. Gonz\'alez-Sarabia, E. Uribe-Paczka, E. Sarmiento, and  C. Renter\'ia,
Relative generalized minimum distance functions,
\url{https://arxiv.org/pdf/1907.11324.pdf}.

\bibitem{mac2} D. Grayson and M. Stillman, 
{\em Macaulay\/}$2$, 1996. 
Available via anonymous ftp from {\tt math.uiuc.edu}.

\bibitem{Pellikaan} P. Heijnen and R. Pellikaan, Generalized Hamming
weights of $q$-ary Reed-Muller codes, IEEE Trans. Inform. Theory {\bf
44} (1998), no. 1, 181--196.




\bibitem{Jaramillo-Vaz-pinto-Villarreal} D. Jaramillo, M. Vaz Pinto, and R. H. Villarreal,
Evaluation codes and their basic parameters,
Des. Codes Cryptogr. {\bf 89} (2021), 269--300.

\bibitem{cocoa-book} M. Kreuzer and L. Robbiano, {\it Computational
Commutative Algebra} 2,
Springer-Verlag, Berlin, 2005. 

\bibitem{cartesian-codes} H. H. L\'opez, C. Renter\'\i a, and R. H.
Villarreal, Affine Cartesian codes,
Des. Codes Cryptogr. {\bf 71} (2014), no. 1, 5--19.


\bibitem{Leduc} E. Leduc, On the third weight of generalized Reed-Muller codes, Discrete Math. {\bf 338} (2015)
1515-1535.


\bibitem{Lopez-Soprunov-Villarreal} H. H. L\'opez, I. Soprunov, and R. H. Villarreal, The dual of an evaluation code, Des. Codes Cryptogr. {\bf 89} (2021), no. 7, 1367-1403. 


\bibitem{Liu-Chen-Luo} Lui. Z., Chen W., and Luo Y., The relative generalized Hamming weight of linear $q$-ary codes and their subcodes, Des. Codes Cryptogr. {\bf 48}, 111-123 (2008).

\bibitem{Luo-Mitrpant-Vinck-Chen} Luo Y., Mitrpant C., Vinck A.H., and Chen K., Some new characters on the wire-tap channel of type II, IEEE Trans. Inf. Theory  {\bf 51}, 1222-1229 (2005).




\bibitem{MacWilliams-Sloane} F. J. MacWilliams and N. J. A. Sloane, The Theory of Error-correcting Codes, North-Holland, 1977. 


\bibitem{Bernal-Pitones-Villarreal} J. Mart\'inez-Bernal, Y. Pitones, and R. H. Villarreal, Minimum distance functions of complete intersection, J. Algebra Appl. {\bf 17} (2018), no. 11, 1850204 (22 pages).

\bibitem{Bernal-Pitones-Villarreal2} J. Mart\'inez-Bernal, Y. Pitones, and R. H. Villarreal, Minimum distance functions of graded ideals and Reed-Muller-type codes, Journal of Pure and Applied Algebra, {\bf 221}, no. 2 (2017), 251--275.

\bibitem{NB-YP-RV}
L. N{\'u}{\~n}ez-Betancourt, Y. Pitones, and R. H. Villarreal, Bounds for the minimum distance function,
Analele Universitatii ``Ovidius" Constanta - Seria Matematica {\bf 29}, no. 3 (2021), 229--242.
 




\bibitem{Rolland}R. Rolland, The second weight of generalized Reed-Muller codes in most cases, Cryptogr. Commun.
{\bf 2} (2010) 19-40.


\bibitem{Sta1} R. Stanley, Hilbert functions of graded 
algebras, Adv. Math. {\bf 28} (1978), 57--83.

\bibitem{stichtenoth} H. Stichtenoth, \textit{Algebraic function
fields and codes}, Second edition, Graduate Texts in Mathematics {\bf
254}, Springer-Verlag, Berlin, 2009. 


\bibitem{Stur1} B. Sturmfels, Gr\"obner Bases and Convex 
Polytopes, University Lecture Series {\bf 8}, American Mathematical
Society, Rhode Island, 1996.


\bibitem{tsfasman} M. Tsfasman, S. Vladut, and D. Nogin,
Algebraic geometric codes{\rm:} basic notions, Mathematical Surveys and
Monographs {\bf 139}, American Mathematical Society, 
Providence, RI, 2007. 

\bibitem{monalg-rev} R. H. Villarreal, {\it Monomial Algebras, Second Edition\/},
Monographs and Research Notes in Mathematics, Chapman and Hall/CRC,
Boca Raton, FL, 2015.


 \bibitem{Wei} V. K. Wei, Generalized Hamming weights for linear codes, IEEE Trans. Inf. Theory {\bf 37} (1991), no. 5, 1412-1418.
 
\bibitem{WML82}  J. Wolf, A. Michelson, and A. Levesque, On the probability of undetected error for linear block codes, IEEE Trans. on Commun. {\bf 30} (1982), no. 2, 317--325.
 


 







\end{thebibliography}

\end{document}